\documentclass[11pt]{article}
\usepackage[utf8]{inputenc}

\usepackage[square,numbers]{natbib}
\bibliographystyle{abbrvnat}

\usepackage{algorithm}
\usepackage[noend]{algpseudocode}
\newcommand{\LineComment}[1]{\Statex \textit{#1}}
\usepackage{amsmath}
\usepackage{amsthm}
\usepackage{amssymb}
\usepackage{amsfonts}
\usepackage{hyperref}
\usepackage{xcolor}
\usepackage{appendix}
\usepackage{graphicx}
\usepackage{bbm}
\usepackage{url}
\usepackage{subfiles}
\usepackage{enumerate}
\usepackage{enumitem}
\usepackage{caption}
\usepackage{subcaption}
\usepackage{float}
\usepackage[margin=1in]{geometry}
\newtheorem{theorem}{Theorem}[section]

\newtheorem{lemma}[theorem]{Lemma}
\newtheorem{proposition}[theorem]{Proposition}
\theoremstyle{definition}
\newtheorem{definition}[theorem]{Definition}

\newtheorem{mech}{Mechanism}
\newtheorem{attack}{Attack}

\newcommand{\M}{\mathcal{M}}
\newcommand{\E}{\mathbb{E}}
\newcommand{\mt}{\mathtt}
\newcommand{\github}{\href{https://github.com/cilvento/b2dp}{github repository}}

\DeclareMathOperator*{\argmin}{arg\,min}

\usepackage{listings, listings-rust}
\usepackage{xcolor}

\definecolor{codegreen}{rgb}{0,0.6,0}
\definecolor{codegray}{rgb}{0.5,0.5,0.5}
\definecolor{codepurple}{rgb}{0.58,0,0.82}
\definecolor{backcolour}{rgb}{0.96,0.96,0.96}
 
\lstdefinestyle{mystyle}{
    backgroundcolor=\color{backcolour},   
    commentstyle=\color{codegreen},
    keywordstyle=\color{blue},
    numberstyle=\tiny\color{codegray},
    stringstyle=\color{codepurple},
    basicstyle=\ttfamily\footnotesize,
    breakatwhitespace=false,         
    breaklines=true,                 
    captionpos=b,                    
    keepspaces=true,                 
    numbers=left,                    
    numbersep=5pt,                  
    showspaces=false,                
    showstringspaces=false,
    showtabs=false,                  
    tabsize=2
}
 
\lstset{style=mystyle}

\title{Implementing the Exponential Mechanism with Base-2 Differential Privacy}
\author{Christina Ilvento\thanks{Harvard John A Paulson School of Engineering and Applied Science. This work was supported in part by the Sloan Foundation.}}
\date{}

\begin{document}

\maketitle

\begin{abstract}
Despite excellent theoretical support, Differential Privacy (DP) can still be a challenge to implement in practice. In part, this challenge is due to the concerns associated with translating arbitrary- or infinite-precision theoretical mechanisms to the reality of floating point or fixed-precision. Beginning with the troubling result of Mironov demonstrating the security issues of using floating point for implementing the Laplace mechanism, there have been many reasonable questions raised concerning the vulnerabilities of real-world implementations of DP.

In this work, we examine the practicalities of implementing the exponential mechanism of McSherry and Talwar. We demonstrate that naive or malicious implementations can result in catastrophic privacy failures. To address these problems, we show that the mechanism can be implemented \textit{exactly} for a rich set of values of the privacy parameter $\varepsilon$ and utility functions with limited practical overhead in running time and minimal code complexity. 

How do we achieve this result? We employ a simple trick of switching from base $e$ to base $2$, allowing us to perform precise base $2$ arithmetic. A short, precise expression is always available for $\varepsilon$, and the only approximation error we incur is the conversion of the base-2 privacy parameter back to base $e$ for reporting purposes. The core base $2$ arithmetic of the mechanism can be simply and efficiently implemented using open-source high precision arithmetic libraries. Furthermore, the exact nature of the implementation lends itself to simple monitoring of correctness and proofs of privacy.
\end{abstract}

\section{Introduction}
As ever more data is collected about individuals for research or business purposes, ensuring that analyses of these data preserve individuals' privacy becomes increasingly important. Differential Privacy \cite{dwork2006calibrating}, a theoretically rigorous framework for evaluating the privacy properties of mechanisms operating on private data, has emerged as the de facto standard for effective privacy protection. 
Differential privacy (DP) requires that the output of any computations over private data should be roughly indistinguishable from the output of those same computations if a single individual had changed her private data. For example, suppose a private database contains information about smoking behavior and demographic information for the patients in a particular health clinic. A differentially private computation over these data would yield the same insights regardless of whether any \textit{specific} individual smoked. That is, the output of a DP computation over the true dataset $\M(d)$ is roughly indistinguishable from the same computation over an \textit{adjacent} database $\M(d')$, in which one individual's smoking behavior was different. Informally, we say that a mechanism $\M$ is $\varepsilon-$differentially private if for all outputs $c$, $\frac{\Pr[\M(d) =c]}{\Pr[\M(d')=c]} \leq e^{\varepsilon}$.

The benefit of this type of definition is that every individual in the database is guaranteed that their individual smoking behavior is not leaked by the outputs of the computation, but the computation can still gain valuable insights that hold in aggregate, e.g., whether a new smoking cessation program is effective. 
Unlike other proposals, e.g., $k$-anonymity, DP gives a privacy guarantee which is robust to subsequent post-processing and combination or composition with other private (or non-private) mechanisms and makes no assumptions about an adversary's computational powers or side information. This makes DP a powerful tool for building privacy-preserving systems constructed from many separate privacy-preserving components.

Unfortunately, the practicalities of building DP systems or mechanisms can be challenging. Many mechanisms satisfy DP by first computing a statistic of the database $f(d)$, and then adding \textit{noise} scaled to the sensitivity of the statistic, i.e., how much the statistic can change between adjacent databases. 
Unfortunately, such mechanisms can be vulnerable to translation problems between the mechanism specification and the reality of limited precision arithmetic (as well as the complexities of timing and other side-channel problems). In particular, adding noise produced from inexact computations over finite precision values may contain artefacts leaking significant information about the original value  
(see \cite{mironov2012significance,gazeau2016preserving,balcer2017differential}). 

In this work, we examine the problem of implementing the popular and highly flexible exponential mechanism of \cite{mcsherry2007mechanism}. 
Unlike additive noise mechanisms, the exponential mechanism samples from a fixed set of publicly known outcomes $O$, choosing each outcome with probability proportional to a utility $u$ determined by the private data, i.e., $\Pr[\M(d) = o]\propto e^{-\varepsilon u(d,o)}$.  
One might initially think that by revealing a result selected from a public list of possible outcomes (rather than the direct result of floating point computations) the exponential mechanism should be immune to attacks based on floating point artefacts. 
However, we show two new classes of attacks on naive  implementations of the exponential mechanism based on inexact floating-point arithmetic which allow an adversary to distinguish between two adjacent databases with high probability of success. 

Our core conceptual contribution is to 
propose a new view of implementing DP in practice with \textit{exact} calculations by switching from base $e$ to base $2$. We show that this implementation technique gives strong privacy guarantees with limited implementation overhead. Our primary technical contributions are (1) demonstrating how to translate the exponential mechanism into base-2; (2) giving simple methods for expressing a wide range of privacy parameters and utility functions exactly; (3) specifying an exact implementation of the base-2 exponential mechanism; (4) providing 
reference implementations of the base-2 exponential mechanism using the GNU Multiple Precision Arithmetic Library and MPFR in Python (via the $\mathtt{gmpy2}$ interface) and Rust (via the $\mathtt{Rug}$ and $\mathtt{gmp-mpfr-sys}$ interfaces) \cite{Granlund12,fousse2007mpfr,gmpy2,SpiteriRug,SpiteriGMP}.\footnote{Reference implementations can be found in the accompanying github repository at \href{https://github.com/cilvento/b2dp}{\url{https://github.com/cilvento/b2dp}}.} Our reference implementations demonstrate the practical benefits of using exact arithmetic for auditing, monitoring correctness and giving proofs of privacy on the implementation itself. 

\subsection{Differential Privacy Preliminaries}
Differential Privacy \cite{dwork2006calibrating} is a strong definition of privacy which requires that a mechanism operating on private data is stable. 
That is, the output of the mechanism cannot change ``too much'' if a single entry in the database is changed. A key strength of DP is that it doesn't apply to a single mechanism: any randomized mechanism satisfying the stability requirement is differentially private. 

\begin{definition}[Pure Differential Privacy]\label{def:DP}
A randomized mechanism $\M$ is $\varepsilon-$differentially private if for all adjacent databases $d\sim d'$, i.e., databases which differ in a single entry,
\[\Pr[\M(d) \in C] \leq e^{\varepsilon}\Pr[\M(d') \in C]\]
where probability is taken over the randomness of $\M$.
\end{definition}


Suppose we wanted to construct a DP mechanism to release a statistic of the database, $f(d)$. Several core DP mechanisms follow the template of computing $f(d)+$ \textit{noise} scaled based on the sensitivity of $f$. The sensitivity of $f$ captures how much $f$ can change if a single entry of the database is changed, and is defined $\Delta f:= \max_{d\sim d'}|f(d) - f(d')|$. Naturally, computing a counting query, e.g. the number of individuals with a salary greater than $ x$, has lower sensitivity than computing an average of a field, e.g, the average salary of all individuals in the database. This follows from observing that the number of individuals with a salary greater than $x$ can change by only $1$ if a single individual's value is changed. However, the \textit{average} salary could change drastically if a single individual's salary is changed.\footnote{e.g., the CEO of a company may make more than all other employees combined so adding or removing such an individual could as much as double the average.}
The best-known of these ``additive noise'' mechanisms is the Laplace Mechanism.
\begin{mech}[The Laplace Mechanism]\label{mech:laplace}
Given a function $f:D \rightarrow \mathbb{R}$ with sensitivity $\Delta f:= \max_{d\sim d' \in D}|f(d)-f(d')|$, a privacy parameter $\varepsilon$ and a database $d$, release $f(d)+\tau$ where $\tau$ is drawn
from the sensitivity scaled Laplace distribution, given by probability density $\mathsf{Lap(t|\frac{\Delta f}{\varepsilon})}=\frac{\varepsilon}{2\Delta f}e^{-\frac{\varepsilon |t|}{\Delta f }}$.
The Laplace Mechanism is $\varepsilon-$DP.
\end{mech}

\noindent Notice that to implement this mechanism as written, one must sample from the Laplace distribution with arbitrary precision. This is problematic in practice as floating point cannot express every possible value in the distribution. In particular, taking $y = \ln(x)$ for uniformly distributed random floating point numbers $x$ (as one would to sample from the Laplace distribution) will result in \textit{gaps} in the set of values for $y$.\footnote{See section \ref{section:laplace}, \cite{ridout2009generating} and \cite{mironov2012significance} for more discussion of inverse transform sampling and the particulars of sampling from the Laplace distribution.} 
As shown by Mironov \cite{mironov2012significance}, this can be exploited by an adversary to determine $f(d)$ from $f(d)+\tau$ by examining the lowest order bits of the result. 
Roughly speaking, this attack takes advantage of the fact that the supports, i.e., the sets of all possible outcomes, of adjacent databases are different due to the differing artefacts of floating point computations.
Fortunately, this particular attack on the Laplace mechanism (and similar concerns with other releases of noisy statistics under finite precision semantics) can be mitigated by careful implementation. 

We briefly highlight several related works. 
First, Gazeau, Miller and Palamidessi  study the problem of bounding privacy loss due to finite-precision semantics \cite{ gazeau2016preserving}. Their approach is to use fixed-precision computation, and to allow for bounded privacy degradation due to error in scaled noise mechanism (e.g., Laplace). 
Balcer and Vadhan consider the question of efficiently constructing differentially private histograms under finite precision semantics \cite{balcer2017differential}. They exhibit efficient mechanisms with accuracy competitive to the Laplace mechanism for counting queries and histograms.
However, these solutions do not directly address issues faced in implementing the exponential mechanism. 
Third, Burke considers an extension of the attack technique of non-identical supports demonstrated by Mironov to the exponential mechanism for digital goods auctions \cite{burke2017}. This work demonstrates that implementation of the exponential mechanism based on inverse transform sampling can yield differences in the outcome spaces for adjacent databases which can be exploited to violate privacy guarantees. 
Finally, Andrysco, Kohlbrenner, Mowery, Jhala, Lerner and Shacham demonstrate the vulnerabilities of floating point implementations of differentially private mechanisms to timing channels \cite{andrysco2015subnormal}. In this work, we are cognizant of timing channels due to logic variation, but we do not address the timing differences in the floating point operations themselves; this is an area for future work. 

\subsubsection*{The Exponential Mechanism.}
The exponential mechanism, proposed by McSherry and Talwar \cite{mcsherry2007mechanism}, is a general purpose DP mechanism for releasing arbitrary strings, values or other outputs according to their private \textit{utility}. 
For example, suppose a company is trying to set a price for a new piece of software. To set the price, they ask a market research firm to survey potential customers about their preferred price point. The market research company knows that customers will only be honest about their preferred prices if their preferences are kept private, so they decide to provide a differentially private \textit{estimate} of the optimal price. 
Unfortunately, computing the optimal price and then adding noise may result in a price that \textit{zero} buyers are willing to pay. For example, if all buyers are willing to pay at most $\$10$, setting a price of $\$10.01$ will result in zero sales.\footnote{Please see Section 3.4 of \cite{dwork2014algorithmic} for a more complete auction-related example.} Using the Laplace mechanism, there is a 50\% chance the company sets a price no buyers are willing to pay!
The exponential mechanism solves this problem in an elegant way: given a \textit{utility} (potential profit) determined by the private database (the customers' maximum prices) for each potential outcome (each price in a pre-determined range), the mechanism samples a single outcome with probability proportional to its utility. 
In this case, the price of $\$10$ would have the best utility and thus a high probability of selection, whereas any price over $\$10$ would have no utility and therefore low probability of selection. 

\begin{mech}[Exponential Mechanism]\label{mech:exp}
Given a privacy parameter $\varepsilon$, an outcome set $O$ and a utility function $u: D \times O \rightarrow \mathbb{R}$ which maps (database, outcome) pairs to a real-valued utility, the exponential mechanism samples a single element from $O$ based on the probability distribution
\[p(o) := \frac{e^{-\varepsilon u(d, o)}}{\sum_{o \in O} e^{-\varepsilon u(d,o)}}\]
If the sensitivity of $u$, $\Delta u := \max_{d \sim d', o \in O} |u(d,o)-u(d',o)| \leq \alpha$, then the exponential mechanism is $2\alpha\varepsilon$-DP. 
\end{mech}

\noindent By convention, the utility function is negated, and lower utility corresponds to higher probability. 

The exponential mechanism is useful in many settings in which the space of possible outcomes (e.g., integer values in a range, preset error strings, synthetic datasets, etc.) is known a priori or the usefulness of a statistic is highly sensitive to noise (e.g., optimal price) but the sensitivity of utility (e.g., profit) is small. In fact, \textit{any} DP mechanism can be expressed as an instance of the exponential mechanism by choosing the appropriate utility function and outcome space (although it may not be the most computationally efficient method) \cite{mcsherry2007mechanism}. For example, the Laplace mechanism is simply the exponential mechanism with utility function $u(f(d),o)=|f(d)-o|$ for outcome space $O = \mathbb{R}$. 

An initial examination of the exponential mechanism might lead the reader to imagine that it is immune to the floating point issues of the Laplace mechanism, as no floating point calculation is released directly and the calculations of $e^x$ shouldn't have ``too much'' error. However, in the next section we will see that naive implementations of the exponential mechanism are vulnerable to floating-point based attacks.

\section{Subverting the Exponential Mechanism}
To understand how a naive implementation of the exponential mechanism can be subverted, we first recall that a floating point number is represented as $m*2^{\sigma x}$, where $m$, the mantissa, $\sigma$, a sign bit, and $x$, the exponent, are all binary integers. (For the purposes of our examples, we assume 5 bits of precision for the mantissa and a maximum (minimum) exponent of $\pm \ell
$.) Our attacks rely on two properties of floating point arithmetic. First, values which are too small to be expressed in the available bits of precision, i.e., smaller than $2^{-\ell}$, are rounded to zero. So it's possible that the outcome of a multiplication of two positive numbers is zero:
\begin{align}
    00001*2^{-\ell}\times 00001*2^{-\ell} = 0 * 2^{0}
\end{align}
Second, adding two floating point numbers of different magnitudes can result in truncated values.
\begin{align}
\begin{array}{c@{\hspace{5pt}}rcr@{{}{}}r}
  & 11111 & & *2^{+\ell}& \\
+ &  & 00001 &*2^{-\ell}  & \\
\hline
  & 11111 & & *2^{+\ell}&  
\end{array}
\end{align}
\noindent Notice that the larger number $11111*2^{+\ell}$ needs all five bits of precision to express its value, so adding any value smaller than $1*2^{+\ell}$ cannot be accounted for. Thus even though a positive number $(1*2^{-\ell})$ was added, the result is equivalent to the original value.

We now translate these two observations into two attacks on a naive Python implementation of the exponential mechanism, outlined in Figure \ref{fig:naive}. This naive implementation first computes the weight of each element using $\mt{np.exp(x)}$ to calculate $e^x$. It then computes the total weight $T$, and normalized cumulative weights, that is $c_i =\frac{ \sum_{j \leq i}w_j}{T}$. To sample from the exponential mechanism distribution, it samples a random value $\mt{index} \sim [0,1]$, and chooses the first index $i$ with normalized cumulative weight larger than $\mt{index}$. This corresponds to sampling each element with probability $\mt{(c\_weights[i] - c\_weights[i-1])/T} \approx w_i/T$. We assume that all utility functions are pre-screened to ensure that they satisfy sensitivity constraints, and our malicious utility functions will have sensitivity $\leq 1$. In fact, the \textit{range} of utilities for our attack utility functions is $1$, and as such these functions would legitimately pass any audit of utility sensitivity. By convention $[n]$ indicates the set of integers $\{1,2,\ldots,n\}$. For outcome spaces, we index the elements of $O$ as $o_{i \in [|O|]}$. We somewhat abuse notation by using $u(o_{i>1})$ and similar to denote utilities $o_i$ for $i \in [n] \backslash \{1\}$. Expressions $\mathtt{np.exp}$, etc indicate the evaluation of a particular implementation, whereas expressions $e^x$ indicate infinite precision unless otherwise noted. 

\begin{figure}
    \centering
\begin{lstlisting}[language=Python]
# The naive exponential mechanism
import numpy as np
# Inputs:
#  eps: the privacy parameter 
#  u: the utility function
#  O: the set of outputs
# Returns: an element in O
def naive_exp_mech(eps, u, O):
    # compute the weight of each element
    weights = [np.exp(-(eps/2.0)*u(o)) for o in O] 
    T = sum(weights)
    # cumulative weights
    c_weights = [sum(weights[0:i]/T)\ 
                 for i in range(1, len(O)+1)] 
    index = np.random.rand() # uniform sample from [0,1]
    # return element corresponding to the random index
    for i in range(0, len(O)):
        if c_weights[i] >= index:
            return O[i] 
\end{lstlisting}
    \caption{Naive reference implementation of the exponential mechanism in Python.}
    \label{fig:naive}
\end{figure}

\begin{attack}[Zero-rounding]\label{attack:zero}
Suppose the attacker wishes to know if Alice is in the database. The attacker chooses an outcome space $O=[k]$ and a utility function such that if Alice is in the database $u(o_1)=x$ and $u(o_{i>1})=x+1$. If Alice is not in the database $u(o_i)=x$ for all $i \in [k]$. Notice that the sensitivity of $u$ is $1$ regardless of the choice of $x$. The attacker then sets $x$ such that $\mathtt{np.exp(-(eps/2)*x)}>0$ but $\mathtt{np.exp(-(eps/2)*(x+1))}=0$. If Alice is in the database, the only viable outcome is $o_1$, but if she is not in the database, all elements have equal (and positive) weight, so a uniformly random element will be selected. Thus, the attacker has distinguishing probability $\frac{k-1}{k}$ on a single run of the mechanism. 
\end{attack}

To mitigate Attack \ref{attack:zero}, one might suggest that we add an assertion that each weight is positive. However, this fix can also be subverted by taking advantage of truncated addition.

\begin{attack}[Truncated addition]\label{attack:truncated}
The adversary determines two values $x_l$ and $x_s$ and an outcome space $O=[k]$ such that
 \begin{align} \mathtt{np.exp(-(eps/2)*x_l)} = & 
 \mathtt{np.exp(-(eps/2)*x_l)} + \sum_{i \in [k]}\mathtt{np.exp(-(eps/2)*(x_s+1))}
 \label{eq:truncated}
 \end{align} 
and 
\begin{align}
\mathtt{2*np.exp(-(eps/2)*(x_l+1))} \approx\mathtt{np.exp(-(eps/2)*(x_l+1))} + \sum_{i \in [k]}\mathtt{np.exp(-(eps/2)*(x_s))}
\label{eq:nottruncated}
\end{align}
Notice that Equation \ref{eq:truncated} results from truncated addition, as $\mathtt{np.exp(-(eps/2)*(x_s+1))}$ is chosen to be small enough that its addition to $ \mathtt{np.exp(-(eps/2)*x_l)} $ is truncated and effectively a no-op.
On the other hand, $\mathtt{np.exp(-(eps/2)*(x_s))}$ is chosen to be large enough that its sum is not truncated, and $k$ is chosen large enough that the summation is approximately equal to $\mathtt{np.exp(-(eps/2)*(x_l+1))}$.

The attacker then chooses a utility function $u(o_1)=x_l+1$
 and $u(o_{i>1})=x_s$ if Alice is not in the database and $u(o_1)=x_l$ and $u(o_{i>1})=x_s+1$ if Alice is in the database. As before, the sensitivity of $u$ is $1$ regardless of the choice of $x_s$ and $x_l$. Notice that if Alice is in the database, the only viable outcome is $o_1$. This is because the the cumulative weight $\mt{c\_weights[o_1]}$ will be equivalent to $1$, and $o_1$ is the the first index with cumulative weight greater than or equal to any sample in $[0,1]$ that will be encountered (Lines 17-19). However, if Alice is not in the database, then $o_1$ is chosen only about half of the time, as the other $k$ outcomes can still claim the region $\approx ( \frac{1}{2},1]$. 
\end{attack}

The troubling property of Attack \ref{attack:truncated} is that all weights are positive, and the attack can be adapted to ensure that all weights exceed some minimum positive value. (Please see the accompanying 
\github~
for practical demonstrations of these attacks.) 
Indeed, one would need to monitor the outcome of each addition in the mechanism to ensure that values are not excessively truncated to try to catch such subversion. However, such monitoring is non-trivial as the additions are not expected to be exactly correct as $e^x$ cannot be expressed exactly in a finite number of bits.

One approach to mitigate these attacks is to clamp the allowed utilities to a range in which additions of the smallest and largest possible weights can be computed safely. But how do we decide on such a range, and what is the impact of the inexact computations on privacy (given the vagaries of floating point \textit{division} as well as addition and approximate computations of $\mathtt{np.exp}$)?
The key problem with declaring a certain range of utilities to be safe is the difficulty of characterizing the privacy loss due to inexact arithmetic. For example, suppose that a range was chosen which has a maximum error on the order of $2^{-k}$. If each individual weight and the total weights are on the order of $2^0$, this error may be negligible from a privacy perspective, but if each individual weight is on the order of $2^{-k}$, it is not. Thus any general guarantee we might give on the privacy loss of the implementation must be pessimistic.
Furthermore, the more involved and complete the defenses for such attacks become, the more complicated and difficult to audit the code becomes, increasing the likelihood of errors and those errors being missed under audit. 
A second approach is to limit analyses to a set of previously vetted utility functions or gadgets. Unfortunately even simple gadgets with well-understood sensitivity properties, e.g., scalar multiplication or addition, can be used to construct the attacks outlined above. 

Furthermore,  more nuanced applications of the exponential mechanism may explicitly rely on sampling from the exact exponential mechanism distribution. For example, suppose that the outcome set is modified based on private information and the proof of privacy relies on the fact that the distributions of the exponential mechanism over these outcome spaces are close, i.e., the difference in the supports has very low weight. Significant error or clamping of small values may break the proof of privacy, as the difference between the distributions may be increased significantly by increasing the relative weight of low probability values. 
The concept of $\delta-$negligibility introduced by Blocki, Datta and Bonneau in  \cite{blocki2016differentially} is a practical example of a use of the exponential mechanism which relies on the exact decay in probabilities of low utility elements, and the efficient approximate-DP min-cut technique presented by Gupta, Ligett, McSherry, Roth and Talwar in \cite{talwar2009approximation} uses a similar argument.

Unfortunately, these more nuanced applications of the exponential mechanism rule out using noisy max with Gumbel noise (see \cite{adams2013Gumbel}) to implement the exponential mechanism. For example, the method proposed by Blocki et al. for efficient weight computation does not directly compute the utility of any integer partition and instead uses cumulative weights for completions of a prefix, which doesn't translate to noisy max.\footnote{An implementation of the method of Blocki et al. based on the exact implementation given in this work is included in the accompanying \github.} In upcoming work, we also present issues with naive translation of bounded noise implementations for the sparse vector technique, which are also relevant to noisy max with Gumbel noise \cite{ilvento2020}.

\section{Implementing the Exponential Mechanism Exactly}
Given these difficulties, we devote the remainder of this work to describing a simple, exact implementation of the exponential mechanism. 
The primary technical observation motivating our solution is that base-2 exponentiation and arithmetic is much easier than base $e$ for computers. Of course, there is more subtlety to the solution than simply performing calculations in base-2, and there are four technical contributions underpinning our results: 
\begin{enumerate}[leftmargin=*]
    \item Base-2 DP: we define base-2 DP and show the simple relationship between base-2 DP and standard base-$e$ DP. We prove that the base-2 exponential mechanism gives an exact privacy guarantee when implemented correctly, i.e., that there is no privacy penalty due to the specifics of the implementation.
    \item Choosing privacy parameters and handling non-integer utility functions: we show how to choose a rich set of privacy parameters $\eta \propto \varepsilon$ such that $2^{-\eta}$ can be computed exactly with limited precision. We demonstrate how to handle utility functions with arbitrary precision if $2^{-\eta u}$ cannot be computed exactly, e.g., $u=\frac{1}{3}$, via randomized rounding. The key insight is that randomized rounding incurs no privacy penalty from inexact implementation as the proof of privacy follows from worst-case rounding behavior.
    \item Sampling from normalized probabilities without division: we show how to sample from normalized probabilities without using division. This allows us to sample from the exact distribution of the exponential mechanism with limited randomness.
    \item Implementation: we give simple and efficient reference implementations of the base-2 exponential mechanism in Python and Rust. 
\end{enumerate}
\subsection{Base-2 differential privacy}
The motivation for base-2 differential privacy follows from a simple observation: computers are very good at exact binary arithmetic. Instead of computing $e^\varepsilon$, we will compute $2^\eta$ for appropriately chosen $\eta$ to achieve the same result from a privacy perspective while still being able to take advantage of exact floating point arithmetic. 
We now formally define base-2 differential privacy.
\begin{definition}[$|_2$ Differential Privacy]\label{def:b2DP}
A randomized mechanism $\M$ is $\eta|_2-$differentially private if for all adjacent databases $d\sim d'$,
\[\Pr[\M(d) \in C] \leq 2^{\eta} \Pr[\M(d') \in C]\]
where probability is taken over the randomness of $\M$.
\end{definition}

\noindent A very simple change of base proves the relationship between base-2 and base-$e$.
\begin{lemma}\label{lemma:DPtob2DP}
Any mechanism which is $\eta|_2$-differentially private is $\ln(2)\eta-$differentially private. 
\end{lemma}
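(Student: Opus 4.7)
The plan is to chase the definitions and apply the identity $2^\eta = e^{\ln(2)\eta}$, which is immediate from the fact that $a = e^{\ln a}$ for $a > 0$. Since base-2 DP and base-$e$ DP differ only in which exponential base appears in the distinguishing inequality, the proof reduces to a single line of substitution.

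Concretely, I would start by fixing an arbitrary pair of adjacent databases $d \sim d'$ and an arbitrary measurable output set $C$. By the hypothesis that $\M$ is $\eta|_2$-differentially private (Definition \ref{def:b2DP}), I have
\[
\Pr[\M(d) \in C] \leq 2^{\eta}\,\Pr[\M(d') \in C].
\]
Next I would rewrite $2^{\eta} = e^{\ln(2)\eta}$ and substitute, obtaining
\[
\Pr[\M(d) \in C] \leq e^{\ln(2)\eta}\,\Pr[\M(d') \in C].
\]
Since $d \sim d'$ and $C$ were arbitrary, this is precisely the statement that $\M$ is $\ln(2)\eta$-differentially private in the sense of Definition \ref{def:DP}.

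There is no real obstacle here; the only thing worth flagging is that the lemma is stated as a one-directional implication, so I do not need to worry about the reverse direction (which of course also holds by the symmetric substitution $e^{\varepsilon} = 2^{\varepsilon/\ln(2)}$). I would also note in passing that the lemma implicitly assumes $\eta \geq 0$, since both definitions are meaningful only for nonnegative privacy parameters, but the change-of-base identity is valid for all real $\eta$ and so nothing in the argument actually depends on the sign.
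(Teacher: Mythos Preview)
Your proposal is correct and takes essentially the same approach as the paper: the paper's proof is also a one-line substitution of $2^{\eta} = e^{\ln(2)\eta}$ into the base-2 DP inequality. Your version is slightly more explicit about quantifying over $d \sim d'$ and $C$, but the argument is identical.
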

\begin{proof}
$\Pr[\M(d) \in C] \leq 2^{\eta} \Pr[\M(d') \in C]$;
$2^\eta = e^{\ln(2)\eta}$, thus
$\Pr[\M(d) \in C] \leq e^{\ln(2)\eta} \Pr[\M(d') \in C]$.
\end{proof}
\noindent We can also re-state the exponential mechanism in base-2.
\begin{mech}[$|_2$ Exponential Mechanism] 
Given a privacy parameter $\eta$, an outcome set $O$ and a utility function $u: D \times O \rightarrow \mathbb{R}$ which maps (database, outcome) pairs to a real-valued utility, the $|_2$ exponential mechanism samples a single element from $O$ based on the probability distribution
\[p(o) := \frac{2^{-\eta u(d, o)}}{\sum_{o \in O} 2^{-\eta u(d,o)}}\]
If $\Delta u \leq \alpha$, then the base-2 exponential mechanism is $2\alpha\eta|_2$-DP.  
\end{mech}


If $\eta$ and $u(d,o)$ are integers,  the base-2 exponential mechanism can be implemented exactly in floating point so long as the floating point computations have sufficient bits to store the entirety of the intermediate results.\footnote{This is not entirely trivial for division, but the workaround is  addressed in Section \ref{sec:normalized}.} Thus, if we were willing to use privacy parameters $\varepsilon =2c\ln(2)$ for positive integers $c$, and restricted ourselves to integer utility functions, implementation is as simple as choosing a good arbitrary-precision floating point library. In the next section, we show how to support smaller privacy parameters and use non-integer utility functions.

\subsection{Non-integer privacy parameters and utilities}
We first address selection of privacy parameters assuming integer utilities, and then we extend to non-integer utilities.
Throughout this section, we use 
$\lceil x\rceil$ to indicate rounding $x$ up to the nearest integer, $\lfloor x \rfloor$ to indicate rounding down and $b_x$ to indicate the number of bits required to represent a number $x$ exactly in binary (assuming $x$ can be written exactly in binary).

\subsubsection{Expressive privacy parameters}
Recall that to run the base-2 exponential mechanism, we need to compute values of $2^{-\eta u}$. 
Notice that if $2^{-\eta}$ can be computed exactly, then $(2^{-\eta})^{u}$ can be computed exactly by repeated squaring for integer $u$. 
First, we characterize a large set of $\eta$ for which $2^{-\eta}$ can be computed exactly, and then we show that the number of bits of precision needed to express $2^{-\eta}$ is not too large in this set. These two ingredients are sufficient to show that we can compute the weights for the base-2 exponential mechanism exactly for a wide range of privacy parameters with reasonable precision (for integer utility functions).

Consider privacy parameters of the form 
\[\eta = -z\log_2(\frac{x}{2^y})\] 
for positive integer $x$, $y$ and $z$ such that $\frac{x}{2^y} \leq 1$. Notice that $2^{-\eta}=(\frac{x}{2^y})^z$. Both the integer $x$ and $2^{-y}$ can clearly be written exactly in binary. To compute $(x2^{-y})^z$ for integer $z$ we can perform exponentiation by repeated squaring.
Notice that even for $z=1$, for small values of $y$, we can still achieve a large set of $\varepsilon$. For example, $-\log_2(15/2^4)\approx 0.06$, $-\log_2(31/2^5)\approx 0.03$, etc.\footnote{Balcer and Vadhan use a similar technique in setting a privacy parameter $\varepsilon'$ which is close to their desired parameter $\varepsilon$, but  behaves well in the exponent in order to do efficient inverse transform sampling. Please see Theorem 4.7 and Algorithm 4.8 of \cite{balcer2017differential}.}

The primary consideration for the choice of $x,$ $y,$ and $z$ is the bits of precision required to write $2^{-\eta}$. 
Given a binary number $q$ and an integer $n$, $q^n$ can be written with $\max(1,b_q |n|)$ bits of precision. 
%
%
Thus to write $2^{-\eta}$ we require at most $z(y+b_x)$ bits, and for any integer $u$, computing $(2^{-\eta})^u$ will require at most $\max(1,|u|z(y+b_x))$ bits of precision.\footnote{Any number that can be written as a binary string with $p$ bits of precision requires at most $p$ bits of precision in the mantissa or significant of a floating point number. In many cases, fewer bits are needed, e.g., $5*2^7$ can be written as a binary string $1010000000$ (10 bits) or in floating point as $101*2^{111}$ (3 bits mantissa and 3 bits exponent).} 
As shown above, setting $z=1$ and choosing small $y$ still allows for a wide range of $\eta$, so in practice the magnitude of $u$ is the primary concern.

Thus, the main consideration for controlling precision is the range of utilities. 
Controlling the magnitude of $u$ is straightforward given a pre-determined range of acceptable $u$ and clamping any observed utilities to this range. As long as the range is determined independent of the private database, clamping to the range has no impact on the privacy guarantee. The following simple proposition states that clamped utility functions have sensitivity no larger than their unclamped counterparts:
\begin{proposition}
Given a utility function $u$ such that $\Delta u \leq \alpha$,  $\mathsf{clamp}(u, A, B)$ where  \[\mathsf{clamp}(u,A,B)(x,o):=  \min(\max(A,u(x,o)), B)\] has sensitivity $\Delta \mathsf{clamp}(u,A,B)\leq \Delta u$.
\end{proposition}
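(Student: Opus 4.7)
The plan is to reduce the claim to the fact that the scalar map $f(t) := \min(\max(A, t), B)$ is $1$-Lipschitz on $\mathbb{R}$, since $\mathsf{clamp}(u, A, B)(x, o) = f(u(x, o))$. Once we have $|f(s) - f(t)| \leq |s - t|$ for all $s, t \in \mathbb{R}$, the sensitivity bound follows directly: for any adjacent $d \sim d'$ and any $o \in O$,
\[
|\mathsf{clamp}(u,A,B)(d,o) - \mathsf{clamp}(u,A,B)(d',o)| = |f(u(d,o)) - f(u(d',o))| \leq |u(d,o) - u(d',o)|,
\]
and taking the max over $d \sim d'$ and $o$ gives $\Delta \mathsf{clamp}(u,A,B) \leq \Delta u$.

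To show that $f$ is $1$-Lipschitz, I would first observe that both $g(t) := \max(A, t)$ and $h(t) := \min(B, t)$ are $1$-Lipschitz individually. One clean way is to use the identities $\max(A, t) = \tfrac{1}{2}(A + t + |t - A|)$ and $\min(B, t) = \tfrac{1}{2}(B + t - |t - B|)$, from which the $1$-Lipschitz property of each follows from the triangle inequality applied to $|\,\cdot\,|$. Alternatively one can just do a short case analysis: for $g$, if $s, t \geq A$ then $|g(s)-g(t)| = |s-t|$; if $s, t \leq A$ then $|g(s)-g(t)|=0$; and if $s \leq A \leq t$ then $|g(s)-g(t)| = t - A \leq t - s = |s-t|$. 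The same case analysis works for $h$. Since $f = h \circ g$ is the composition of two $1$-Lipschitz maps, it is itself $1$-Lipschitz.

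There is essentially no hard step here; the only thing to be careful about is that the argument is pointwise in $o$ and holds simultaneously for every adjacent pair, so that the supremum defining sensitivity transfers correctly. No assumption about whether $A \leq B$ (or otherwise) is needed: if $A > B$ the clamped function is constant and has sensitivity $0 \leq \Delta u$, which is also consistent with the contraction argument since $f$ collapses to a constant map in that degenerate case.
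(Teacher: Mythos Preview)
Your proof is correct and is essentially a fully worked-out version of the paper's argument: the paper simply remarks that ``clamping values cannot increase the difference in utility of adjacent databases,'' which is exactly your $1$-Lipschitz statement for $f(t)=\min(\max(A,t),B)$. Your decomposition via $f=h\circ g$ and the handling of the degenerate $A>B$ case are more careful than what the paper provides, but the core idea is the same.
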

\noindent The proof of the proposition follows from observing that clamping values cannot increase the difference in utility of adjacent databases.\footnote{Note that settings in which outcome spaces of the mechanism are not equivalent for adjacent databases, e.g. \cite{blocki2016differentially,talwar2009approximation}, that this clamping argument does not hold, and the full range of $u$ specified by the mechanism must be supported at the cost of increased precision.}


\subsubsection*{Determining the minimum precision}
As we will see in the next section, we need precision sufficient to compute any combination (subset sum) of the weights $2^{-\eta u(d,o)}$ for any set of utilities within the specified clamping bounds in order to sample from the desired distribution.
There are two general approaches for determining the minimum precision: (1) run the mechanism at a given precision, and repeat the mechanism at higher precision if inexact arithmetic is performed or (2) determine the minimum precision before running the mechanism using the publicly specified  bounds on the allowed utility values and outcome space size. The first approach is prone to timing channels, and so we prefer the second approach.

There are two methods for determining minimum precision before executing the mechanism: a worst-case theoretical analysis or worst-case empirical procedure. We first discuss the theoretical worst-case. The following Lemma gives a worst-case bound on the precision required for computing weight combinations.
\begin{lemma}\label{lemma:minprec}
Given a range of utilities $u_{min}$ to $u_{max}$, a maximum number of outcomes $o_{max}$ and a privacy parameter $\eta = -z\log_2(\frac{x}{2^y})$ such that $x,y,z$ are positive integers and $\frac{x}{2^y}\leq 1$, the sum of any subset of at most $o_{max}$ weights of the form $2^{-\eta u}$ computed from integer utilities within the range $[u_{min}, u_{max}]$ requires at most $(\max(1,|u_{min}|) $ $+ \max(1,|u_{max}|))z(y + b_x) + o_{max}$ bits of precision.
\end{lemma}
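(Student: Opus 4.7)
The plan is to separate the bound into (i) the precision needed for a single weight and (ii) the additional precision needed to add up to $o_{max}$ such weights. One subtlety is that for integer $u<0$, the weight $2^{-\eta u}=(2^y/x)^{z|u|}$ is generally not a finite binary fraction, since its denominator $x^{z|u|}$ need not be a power of two. I would sidestep this by shifting: set $u':=u-u_{min}$, so $u'\in[0,u_{max}-u_{min}]$ and the corresponding weights $2^{-\eta u'}=(x/2^y)^{zu'}$ are exact binary fractions in $(0,1]$. All weights are thereby multiplied by the common factor $2^{\eta u_{min}}$, which cancels in any ratio and therefore in every subset-sum ratio the mechanism consults, so working with the shifted weights is equivalent for representing the sampling distribution exactly.

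Next, I would bound the per-weight precision using the observations established in the paragraphs preceding the lemma. Since $x/2^y$ can be written with at most $y+b_x$ bits and $q^n$ needs at most $\max(1,b_q|n|)$ bits, each shifted weight $(x/2^y)^{zu'}$ is representable in at most $\max(1,zu'(y+b_x))\le z(u_{max}-u_{min})(y+b_x)$ bits, all of them fractional because the value lies in $(0,1]$. For the sum, observe that fractional precision is preserved under addition (aligning summands to the finest fractional position present already suffices), while the integer part of a sum of at most $o_{max}$ values, each bounded by $1$, is at most $o_{max}$, which takes at most $\lceil\log_2(o_{max}+1)\rceil\le o_{max}$ integer bits. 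Combining these two contributions, the sum is representable with at most $z(y+b_x)(u_{max}-u_{min})+o_{max}$ bits.

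Finally, I would replace $u_{max}-u_{min}$ by the expression in the lemma via $u_{max}-u_{min}\le|u_{max}|+|u_{min}|\le\max(1,|u_{max}|)+\max(1,|u_{min}|)$ (triangle inequality followed by monotonicity of $\max$), which yields the stated bound. I expect the main obstacle to be conceptual: recognizing that the bound cannot refer literally to $2^{-\eta u}$ for negative $u$, and that the relevant object is the shifted weight the implementation actually manipulates. Once the shift is in place, the remaining steps are routine accounting about the lengths of binary representations of products, powers, and sums.
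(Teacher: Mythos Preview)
Your decomposition into per-weight precision plus $o_{max}$ bits of summation overhead is the same skeleton the paper uses, but the two proofs diverge in how they handle the per-weight term. The paper does not shift: it takes the earlier assertion that $2^{-\eta u}$ needs at most $\max(1,|u|)z(y+b_x)$ bits at face value for all integer $u$, then attributes the ``before the decimal'' contribution to the largest weight (utility $u_{min}$) plus $o_{max}$ carry bits, and the ``after the decimal'' contribution to the smallest weight (utility $u_{max}$), and simply adds the two. Your shift $u\mapsto u-u_{min}$ is a genuine addition: it resolves the issue you correctly flag---that for $u<0$ the weight $(2^y/x)^{z|u|}$ is generally not a terminating binary fraction, so the paper's per-weight claim fails in that regime and the lemma, read literally, cannot hold there. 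What you gain is rigor; what you give up is proving the statement as written, since you bound the precision of the shifted weights rather than of the original $2^{-\eta u}$. Note also that the paper's implementation computes the unshifted weights $b^{U[i]}$, so your argument is really a proposal to modify the implementation as well as the proof; the paper's argument, by contrast, simply proceeds without confronting the negative-$u$ difficulty.
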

The proof of Lemma \ref{lemma:minprec} appears in the Appendix. 
The downside of this theoretical worst-case is that it ignores any possible cancellation or efficiency in floating point representation (rather than binary strings). 

The worst-case empirical procedure instead computes every hypothetical worst case, and reports the required precision. More concretely, 
given a bound on the number of outcomes ($o_{max}$) and a range of utilities  ($[u_{min},u_{max}]$) with  maximum weight $w_{max}=2^{-\eta u_{min}}$, it suffices to ensure that we have sufficient precision to calculate (1) each weight independently, i.e., $2^{-\eta u}$ for integer $u \in [u_{min}, u_{max}]$ and (2) the maximum possible sum of the weights plus the weight with highest fractional precision ($w_*$), i.e. $\lceil{o_{max}w_{max}}\rceil + w_{*}$. (Note that $w_*$ is not necessarily the smallest weight.) This follows from observing that the maximum number of bits required for the mantissa will be dictated by the largest possible sum of weights and the highest fractional precision needed to express any individual weight. Thus if the precision is sufficient to express $w_{i}$ and $\lceil{o_{max} w_{max}}\rceil + w_{i}$ for all $i \in [u_{min}, u_{max}]$, then it is sufficient to compute the sum of any valid subset of weights. We implement a simple iterative procedure which attempts to compute this set of sums and increases the working precision if any computations are inexact, terminating once all computations can be performed exactly. 
Algorithm \ref{alg:precision} outlines the procedure in detail.

The benefit of the empirical procedure is that it may result in lower, but still safe, working precision compared with the theoretical worst-case. However, it does come at an increased computational cost, as it iterates over all utilities in the range (potentially) several times. We discuss this trade-off in more detail in Section \ref{sec:implementation}.

\begin{algorithm}[htbp]
  \caption{Minimum precision determination}
  \label{alg:precision}
\begin{algorithmic}[1]
  \LineComment{\textbf{Inputs}: $u_{min}$, the minimum utility, $u_{max}$, the maximum utility, $o_{max}$ the maximum number of outcomes, the privacy parameter $\eta$} 
  \LineComment{\textbf{Outputs}: $p$, a sufficient precision no more than twice the size of the minimum precision to successfully run the base-2 exponential mechanism.}
    \Procedure{ComputePrecision}{$u_{min},u_{max},o_{max},\eta$}
    \State $p \leftarrow 1$
    \While{$\textsc{CheckPrecision}(u_{min},u_{max},o_{max},\eta,p)$ fails}
    \State $p \leftarrow 2p$ \Comment{$2p$ can be changed to a fixed increment if preferred.} 
    \EndWhile
    \State \textbf{return} $p$
    \EndProcedure
    \Function{CheckPrecision}{$u_{min},u_{max},o_{max},base,p$}
    \State Set the precision to $p$
    \State \textbf{return failure} on inexact arithmetic \textbf{for:}
    \State $maxsum \leftarrow \sum_{i \in [o_{max}]}2^{-\eta u_{min}}$
    \For{$u \in [u_{min},u_{max}]$}
    \State $combinedsum \leftarrow 2^{-\eta u} + \lceil{maxsum}\rceil$
    \EndFor
    \EndFunction
  \end{algorithmic}
\end{algorithm}

\subsubsection{Non-integer utilities}
Many applications of the exponential mechanism are amenable to integer utility functions, e.g. converting decimal dollar values to cent values. However, there may be cases where the utility function provided (for example, by a third party) cannot be modified directly, or we wish to simulate a distribution (such as the discrete Laplace distribution). We now consider the case of a utility function which provides arbitrary precision non-integer utilities. 

The simple solution to non-integer utilities is to use randomized rounding, i.e. rounding up or down to the nearest integer with probabilities proportional to how close the value is to each. Our basic strategy is to show that randomized rounding \textit{incurs no privacy penalty from inexact implementation}.

\begin{mech}[Randomized Rounding Exponential Mechanism] \label{mech:rrexpmech}
Given a privacy parameter $\varepsilon$, an outcome set $O$ and a utility function $u: D \times O \rightarrow \mathbb{R}$ which maps (database, outcome) pairs to a real-valued utility, the randomized rounding exponential mechanism first assigns an integer proxy utility 
\[\rho(u(d,o)):=  \begin{cases}
\lfloor{u(d,o)} \rfloor \text{ with probability } |u(d,o)-\lceil{u(d,o)} \rceil|\\
 
\lceil{u(d,o)} \rceil \text{ with probability } |u(d,o)-\lfloor{u(d,o)} \rfloor|\\
\end{cases} \]
and then samples a single element from $O$ based on the probability distribution
\[p(o) := \frac{e^{-\varepsilon \rho(u(d, o))}}{\sum_{o \in O} e^{-\varepsilon \rho(u(d,o))}}\]
\end{mech}

We now consider the privacy of randomized rounding in the exponential mechanism. The argument follows from considering the worst possible set of rounding choices, and arguing that privacy loss is no worse than in that case.
\begin{lemma}[Privacy of arbitrary precision randomized rounding]\label{lemma:randomizedrounding}
Given an implementation of a utility function $u: D \times O \rightarrow \mathbb{R}$ which guarantees that for any pair of adjacent databases $|u(d',o)- u(d,o)| \leq \alpha$ for integer $\alpha$ as implemented\footnote{By ``as implemented'' we mean that the implementation of $u$ has sensitivity $\leq \alpha$. If inexact implementation of $u$ results in increased sensitivity, then this must be taken into account.}, the exponential mechanism with a randomized rounding function of arbitrary precision is $2\alpha \varepsilon-$DP.
\end{lemma}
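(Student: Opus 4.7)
The plan is to use a coupling argument together with the standard privacy analysis of the exponential mechanism. For each outcome the randomized rounding is driven by fresh randomness, and the trick is to share that randomness between the executions on $d$ and on $d'$. Under the coupling, the two integer proxy utilities will differ pointwise by at most $\alpha$, which reduces the claim to the standard $2\alpha\varepsilon$-DP guarantee of the exponential mechanism applied to integer utility functions of sensitivity $\alpha$.

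First I would define the coupling. For each $o \in O$, draw a single uniform $U_o \in [0,1]$ that is used simultaneously for both inputs: set $\rho(u(d,o)) = \lceil u(d,o)\rceil$ exactly when $U_o < u(d,o) - \lfloor u(d,o)\rfloor$, and $\rho(u(d,o)) = \lfloor u(d,o)\rfloor$ otherwise; define $\rho(u(d',o))$ analogously from the same $U_o$. By inspection the marginal law of $\rho(u(d,o))$ matches the specification in Mechanism~\ref{mech:rrexpmech}, and likewise for $d'$, so switching to the coupling does not change the output distribution of $\M$ on either database.

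The main technical step is the coupling lemma: for every $o$ and every realization of $U_o$,
\[
|\rho(u(d,o)) - \rho(u(d',o))| \;\leq\; \lceil |u(d,o) - u(d',o)|\rceil \;\leq\; \alpha,
\]
where the second inequality uses the assumption that $\alpha$ is an integer. This is proved by a short case analysis on which side of $U_o$ the two fractional parts $u(d,o) - \lfloor u(d,o)\rfloor$ and $u(d',o) - \lfloor u(d',o)\rfloor$ lie: either both round in the same direction (and the difference equals the integer part of the shift), or they split across $U_o$ (and the difference is off by at most one). I expect this to be the main obstacle, and it is exactly where integrality of $\alpha$ matters: if $\alpha$ were allowed to be non-integer, one could choose fractional parts so that $\lceil|u(d,o)-u(d',o)|\rceil$ strictly exceeds $\alpha$, and the final bound would degrade to $2\lceil\alpha\rceil\varepsilon$.

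Given the coupling lemma, the remainder is routine. Condition on a realization $(U_o)_{o \in O}$. The two coupled proxy utilities are then deterministic integer-valued functions of $o$ whose pointwise difference is at most $\alpha$, so the standard exponential mechanism argument of \cite{mcsherry2007mechanism} yields, for every $C\subseteq O$,
\[
\Pr[\M(d) \in C \mid U] \;\leq\; e^{2\alpha\varepsilon}\,\Pr[\M(d')\in C \mid U].
\]
Taking expectation over the common randomness $U$ preserves the inequality and gives $\Pr[\M(d)\in C] \leq e^{2\alpha\varepsilon}\Pr[\M(d')\in C]$, establishing $2\alpha\varepsilon$-DP. Crucially, no privacy loss is incurred from inexact floating-point implementation of the rounding: the analysis only uses the worst-case bound on $|\rho(u(d,o)) - \rho(u(d',o))|$ under the coupling, which holds regardless of the precision with which the rounding probabilities are realized.
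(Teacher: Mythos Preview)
Your proposal is correct and is essentially the same argument as the paper's: the paper also couples the rounding randomness (a uniform $s\in[0,1)$ per outcome, collected into $S\in[0,1)^{|O|}$), performs the same case analysis to show the coupled integer proxies differ by at most $\alpha$, applies the standard exponential-mechanism bound for each fixed $S$, and then averages. Your intermediate inequality $|\rho(u(d,o))-\rho(u(d',o))|\le\lceil|u(d,o)-u(d',o)|\rceil$ is a slightly cleaner packaging of the same case split, but the structure and content of the proof are identical.
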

The proof of Lemma \ref{lemma:randomizedrounding} appears in the appendix.
The primary benefit of Lemma \ref{lemma:randomizedrounding} is that implementing randomized rounding does not require high precision to maintain privacy.
Although randomized rounding does not incur a privacy penalty for low precision implementation, it's not immediately obvious how well the exponential mechanism with randomized rounding approximates the original non-integer utilities.
Randomized rounding is a monotone transformation (see Lemma \ref{lemma:monotone}) and as such the relative ordering of probabilities is preserved, and the following simple bound gives some indication that randomized rounding doesn't change probabilities ``too much'':
\begin{proposition}
Take $p_o$ to be the probability that an outcome $o$ is chosen by the exponential mechanism, and $p_o'$ to be the probability that $o$ is chosen by the randomized rounding exponential mechanism. For any $o$, $p_o e^{-2\varepsilon} \leq p_o' \leq p_o e^{2\varepsilon}$.
\end{proposition}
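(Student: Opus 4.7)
The plan is to show the inequality holds pointwise over every realization of the randomized rounding function $\rho$, and then conclude that the expectation (which is what $p_o'$ equals) must satisfy the same bounds.

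First I would observe that by construction, $\rho(u(d,o))$ is always either $\lfloor u(d,o) \rfloor$ or $\lceil u(d,o) \rceil$, so for every outcome $o$ and every realization of the rounding, $|\rho(u(d,o)) - u(d,o)| \leq 1$. Consequently, each weight satisfies
\[
e^{-\varepsilon} \cdot e^{-\varepsilon u(d,o)} \;\leq\; e^{-\varepsilon \rho(u(d,o))} \;\leq\; e^{\varepsilon} \cdot e^{-\varepsilon u(d,o)}.
\]
I would then fix an arbitrary realization of $\rho$ and bound the conditional selection probability. The numerator corresponding to outcome $o$ is at most $e^{\varepsilon}$ times $e^{-\varepsilon u(d,o)}$, while the denominator $\sum_{o' \in O} e^{-\varepsilon \rho(u(d,o'))}$ is at least $e^{-\varepsilon}$ times $\sum_{o' \in O} e^{-\varepsilon u(d,o')}$, since the same factor-of-$e^{\varepsilon}$ bound applies termwise. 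Dividing gives the upper bound
\[
\frac{e^{-\varepsilon \rho(u(d,o))}}{\sum_{o' \in O} e^{-\varepsilon \rho(u(d,o'))}} \;\leq\; \frac{e^{\varepsilon} \, e^{-\varepsilon u(d,o)}}{e^{-\varepsilon} \sum_{o' \in O} e^{-\varepsilon u(d,o')}} \;=\; e^{2\varepsilon} \, p_o,
\]
and symmetrically the lower bound $e^{-2\varepsilon} p_o$.

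Finally, since the conditional probability lies in the interval $[e^{-2\varepsilon} p_o, e^{2\varepsilon} p_o]$ for \emph{every} outcome of the rounding coins, taking expectation over $\rho$ (which is how $p_o'$ is defined) preserves the same interval, yielding $p_o e^{-2\varepsilon} \leq p_o' \leq p_o e^{2\varepsilon}$.

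There is no real obstacle here; the only subtlety is making sure the worst-case direction for the numerator and the worst-case direction for the denominator are taken simultaneously (which is valid because the bound is pointwise in the rounding outcome, not a separate maximization of numerator and minimization of denominator under independent randomness). Once that is stated cleanly, the proof is essentially a single application of the bound $|\rho - u| \leq 1$ to both the top and the bottom of the exponential mechanism's sampling probability.
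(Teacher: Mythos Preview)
Your proof is correct. The paper does not actually supply a proof for this proposition; it is stated as a ``simple bound'' and left to the reader, so there is no paper proof to compare against. Your argument---bounding each weight by a multiplicative factor of $e^{\pm\varepsilon}$ via $|\rho(u)-u|\le 1$, combining the numerator and denominator bounds to get $e^{\pm 2\varepsilon}$, and then averaging over the rounding coins---is exactly the intended elementary calculation. The parenthetical worry about ``simultaneous worst cases'' is unnecessary: since the bounds on numerator and denominator each hold deterministically for every realization, there is no tension in applying them together.
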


In Section \ref{section:laplace}, we evaluate the error of randomized rounding empirically, and demonstrate that the bounds are reasonably good for applications like simulating Laplace noise.
If tighter error bounds are needed in a given setting, we derive a tighter (but more complicated) bound on the error compared with the unrounded mechanism in the appendix. 
%
%
Informally, we derive this bound by lower-bounding the probability that the randomized rounding mechanism outputs a particular value $o$. As the mechanism is guaranteed to output one value from $O$, we can determine the maximum error of the lower bound by comparing the sum of the lower-bounds for each value $o \in O$ with 1. We then assume the worst case distribution of this error, namely that it is concentrated on a single value. This yields an upper bound on the probability the mechanism outputs any particular $o$. We can then determine a maximum pointwise error between the rounded and unrounded mechanisms by computing the maximum difference between the bounds and the unrounded mechanism probability. This method is simple to implement for reporting purposes, and we give an example set of bounds in the appendix. 

\subsection{Normalized sampling without division}\label{sec:normalized}
The final ingredient required to implement the exponential mechanism with exact arithmetic in base-2 is to demonstrate how to sample an outcome, i.e., how to normalize the weights. The challenge with division is that even if numerator and denominator can be expressed in a small number of bits, their quotient may not be easily expressed (e.g., $\frac{11}{7}$).


We compute the total weight $t$ and assign each element a contiguous region in $[0,t)$ with length equivalent to its weight, (i.e., the first element is assigned $[0,w_1)$ the second $[w_1,w_1 +w_2)$ and so on).
The sampling method follows from the observation that sampling a value uniformly at random in $[0,t)$ is identical to sampling from the normalized distribution. To sample uniformly from $[0,t)$, we construct a string of random bits with maximum precision in $[0, 2^{g})$ where $2^g$ is the smallest power of $2$ greater than or equal to $t$. If the sample exceeds $t$, then we discard it and try again. We then compute the cumulative weights $c_i := \sum_{j \leq i}w_i$, and each element is assigned the region $[c_{i-1},c_i)$.
Given the sample $s \sim [0,t)$, we select the corresponding element index by 
performing a binary search on the range $[0,t)$ which discards elements ``ruled out'' by each bit of $s$, stopping when only one valid element remains. 

Why not simply iterate through the values and output the smallest $i$ such that $c_i> s$? If the precision provided is correct, there is no difference between these two approaches. However, if the precision provided is somehow incorrect, (e.g., because a bug was introduced in precision determination, the weights do not actually conform to the bounds used to compute the precision, the inexact arithmetic monitoring fails, etc.), the ``binary search'' style method identifies if the precision used for sampling was insufficient to correctly sample from the desired distribution, as more than one element will remain in this case.
Algorithm \ref{alg:psweightedsample} outlines this procedure in detail.

\begin{algorithm}[htbp]
  \caption{Normalized weighted sampling}
  \label{alg:psweightedsample}
\begin{algorithmic}[1]
  \LineComment{\textbf{Inputs}: $W$, a set of weights, $p$, the precision and $k$, a timing channel mitigation parameter.} 
  \LineComment{\textbf{Outputs}: $i$, an index sampled according to $p(i) := \frac{w_i}{\sum_{w_j \in W} w_j}$.}
    \Procedure{NormalizedSample}{$W,p,k$}
    \State $t \leftarrow \sum_{w \in W}w$ \Comment{The total weight.}

    \For{$i \in \{1,\ldots,|W|\}$} \Comment{Compute the cumulative weights.}
       \State $c_i \leftarrow \sum_{j=1}^i w_j$  \Comment{Each element assigned $[c_{i-1},c_i)$.}
    \EndFor
    \State $s \leftarrow 0$
    \State $j  \leftarrow 0$ 
    \State $g \leftarrow \argmin_g \{2^g \geq t\}$ 
    \State $R \leftarrow [|W|]$ \Comment{the remaining elements}
    \State $s^* \leftarrow \textsc{GetRandomValue}(p,t)$
    \While{$|R| > 1$} \label{line:loop}
        \State $s \leftarrow s + s^*_{g-j}2^{g-j}$ \Comment{Add the $j^{th}$ bit of $s^*$ to $s$.}
        \For{$i \in R$}
            \If{$c_i \leq s $} \Comment{$s$ cannot be in $[c_{i-1},c_i)$, even if all remaining bits of $s^*$ are $0$.}
                \State $R \leftarrow R \backslash \{i\}$ 
            \EndIf
            \If{$i>0$ and $c_{i-1} \geq s + 2^{g-j} $} \Comment{$s$ cannot be in $[c_{i-1},c_i)$, even if all remaining bits of $s^*$ are 1.}
                \State $R \leftarrow R \backslash \{i\}$ 
            \EndIf
        \EndFor
        \State $j \leftarrow j + 1$
        \If{$|R| > 1$ and $j > p$} \Comment{Precision was insufficient} \label{line:ps}
            \State \textbf{return} error
        \EndIf
    \EndWhile
    \State \textbf{return} $l$ 
    \EndProcedure
    \LineComment{\textbf{Inputs:} the number of bits of precision $p$, the upper bound of the range $[0,t)$ from which to sample, and $k$, a timing channel mitigation parameter.}
    \LineComment{\textbf{Output:} a number sampled uniformly at random from $[0,t)$.}
    \Function{GetRandomValue}{$p,t,k$}
    \State $g \leftarrow \argmin_g \{2^g \geq t\}$
    \State Initialize $s \leftarrow \infty$,  $\mathbf{s^* \leftarrow \infty}$, ${c \leftarrow 0}$
    \While{$s \geq t$ and ${c < k}$} \Comment{Reject $s$ if it falls outside $[0,t)$ or fewer than $k$ iterations.} \label{line:whilereject}
        \For{$i \in \{0,\ldots,p\}$}
            \State $r_i \sim \mathbf{Unif}(0,1)$ 
        \EndFor
        \State $s \leftarrow \sum_{i =0}^p r_i2^{g-i}$ \label{line:sum}
        \If{$s^* = \infty$ and $s < t$} 
            \State $s^* \leftarrow s$ \Comment{update $s^*$ if $s$ is in range.}
        \EndIf
        \State $c \leftarrow c + 1$
    \EndWhile
    \State $\textbf{return}$ $s^*$
    \EndFunction
  \end{algorithmic}
\end{algorithm}

\clearpage

\begin{lemma}\label{lemma:normalizesampling}
Given a set of weights $W$ with total weight $t$ and sufficient precision $p$ such that the addition of any combination of weights in $W$ can be expressed exactly with precision $p$, 
Algorithm \ref{alg:psweightedsample} 
outputs the index of an element in $W$ with probability distribution identical to that of the exponential mechanism and
the procedure uses at most $O(p)$ random bits with high probability. If $p$ is not sufficient to sample correctly from the distribution, an error is returned.
\end{lemma}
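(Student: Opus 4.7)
The plan is to decompose the lemma into three sub-claims and address each in turn: (i) when $p$ is sufficient, the output distribution equals $w_i/t$; (ii) the procedure consumes $O(p)$ random bits; (iii) insufficient precision triggers an error return.

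For (i), the central observation is that under the precision hypothesis every cumulative weight $c_i = \sum_{j \le i} w_j$ is a subset sum of weights in $W$ and hence exactly representable with $p$ bits. Combined with $c_i \le t \le 2^g$, this forces each $c_i$ to be an integer multiple of $2^{g-p}$. I would conclude that $[0,t)$ partitions into $N := t\cdot 2^{p-g}$ equal-width dyadic cells, and the region $[c_{i-1},c_i)$ assigned to element $i$ covers exactly $w_i \cdot 2^{p-g}$ of them. Then \textsc{GetRandomValue} draws $p+1$ uniform bits to form $s$ on this grid and rejects until $s < t$; conditional on acceptance, $s^*$ is uniform over the $N$ cells, so $\Pr[s^* \in [c_{i-1},c_i)] = w_i/t$ as required.

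For the binary search in \textsc{NormalizedSample}, I would argue by induction on $j$ that after consuming $j$ bits of $s^*$ into the running approximation $s$, the set $R$ contains exactly the indices $i$ for which $[c_{i-1},c_i) \cap [s,\, s + 2^{g-j}) \neq \emptyset$. The two removal conditions ($c_i \le s$ and $c_{i-1} \ge s + 2^{g-j}$) are precisely the negations of this intersection being nonempty, so the invariant is preserved. Once $j = p$, the ``current possible range'' collapses to the single grid cell containing $s^*$, which lies in exactly one region; thus $|R|=1$ and the correct index is returned. This also yields (ii): the outer loop runs at most $p$ times, and \textsc{GetRandomValue} draws at most $(p+1)k$ bits in total (with $k$ the fixed rejection bound), so the procedure consumes $O(p)$ bits. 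For (iii), the check at line~\ref{line:ps} explicitly returns an error whenever $j$ exceeds $p$ with $|R| > 1$; this is the exact failure mode when some $c_i$ fails to lie on the $2^{g-p}$ grid, since the invariant then leaves at least two surviving indices after the $p$-th bit is read.

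The main obstacle will be the grid-alignment argument in (i): carefully justifying that the ``sufficient precision'' hypothesis really forces every $c_i$ onto the $2^{g-p}$ grid (and not merely onto some coarser dyadic grid that could misalign with the uniformly sampled $s^*$). A secondary care point is isolating the role of the timing-channel parameter $k$ in \textsc{GetRandomValue}: I would treat it as a constant so that the randomness bound remains $O(p)$, while observing that conditional on the rejection loop terminating within $k$ iterations (which happens with probability $1 - (1 - t/2^g)^k$ per call) the distribution of $s^*$ is exactly uniform on the $N$ cells of $[0,t)$, which is what the correctness argument above requires.
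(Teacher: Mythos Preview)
Your three-part decomposition matches the paper's exactly, and your grid/invariant argument for correctness is a more explicit version of the paper's ``oracle'' argument (after $p$ bits of $s^*$, any further bits cannot change the chosen index because every $c_i$ lies on the $2^{g-p}$ grid). So the overall approach is the same.

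Two places where your version drifts from the paper and should be tightened. First, the randomness bound: you treat the timing parameter $k$ as a hard cap on iterations of the rejection loop and conclude a deterministic $O(pk)$ bound. The paper instead argues that each rejection requires the leading random bit to be $1$ (since $t > 2^{g-1}$), so the rejection probability per draw is at most $1/2$; hence after $m$ draws a valid $s^*$ is found with probability $\ge 1-2^{-m}$, which is where the ``with high probability'' in the lemma comes from. The loop is intended to run at least $k$ times (for timing mitigation) but may run more; your analysis does not cover that case. Second, your claim for (iii) is stated too strongly: insufficient precision does not \emph{always} produce an error. The paper's conclusion is conditional --- either the $p$ bits of $s^*$ already land in an unambiguous cell (so the returned index is the same one infinite precision would give), or $|R|>1$ after $j=p$ and the error is raised. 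Your sentence ``this is the exact failure mode'' should be rephrased to reflect that dichotomy.
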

The proof of Lemma \ref{lemma:normalizesampling} appears in the appendix.

\subsubsection*{Timing Channels}
Although it is subtle, Algorithm \ref{alg:psweightedsample} has a timing channel, driven by the difference in rejection probability (Line \ref{line:whilereject}).  For example, if the total weight is $2^\ell$ for database $d$, but for $d'$ the total weight is $2^{\ell}-1$, the sampling procedure for $d$ may take longer than the sampling procedure for $d'$. This follows from observing that the probability of $s\geq t$ is zero for $d'$, but near $\frac{1}{2}$ for $d$. The failure probability always ranges from $[0,\frac{1}{2}]$, so it cannot be arbitrarily different even for very large differences in the total weight. The severity of this timing channel depends on how quickly random bits can be sampled and how easily an adversary can distinguish between different numbers of rejections. 
(We demonstrate a practical exploit of this timing channel in Section \ref{sec:implementation}.)

For example, suppose that for $d$, the rejection probability is $\frac{1}{2}$ and for $d'$ the rejection probability is $0$. With probability greater than $0.999$, $d$ will reject fewer than 10 times, so if generating the random value 10 times versus one time is indistinguishable by the adversary, then the timing channel will not be useful. However, if the adversary has access to fine-grained timing information (e.g., if the adversary is able to make randomness scarce and thus increase the time it takes to sample random bits) or the number of random bits used, this channel could be problematic.\footnote{Recall that $p$ is set independent of the database, so we are not concerned if the adversary learns the number of bits of precision required for each iteration of the loop, rather we are concerned about the adversary learning the number of times the loop is executed.} Furthermore, it's worth noting that this timing channel is not dependent on the privacy parameter, so the adversary may repeat the procedure many times with small budget in order to observe timing information. 

Fortunately, we can mitigate this timing channel by forcing the rejection loop to run at least $k$ times, even if a suitable $s$ has already been identified. By running the loop a fixed minimum number of times, the timing channel is not completely removed, but the probability of observing more than $k$ failures, and thus that any difference in timing can be observed, can be driven arbitrarily low by increasing the size of $k$. 

\begin{proposition}\label{prop:timingchannel}
Fixing $p$, for any $t$, the probability that \textsc{GetRandomValue} executes its \textbf{while} loop more than $k$ times is bounded by $2^{-k}$, thus with probability at least $1-2^{-k}$, \textsc{GetRandomValue} has the same running time regardless of the underlying database. Thus for any adversary that can observe fine-grained timing information or the number of random bits used, the probability that an adversary observes useful timing information from \textsc{GetRandomValue} is at most $2^{-k}$.  
\end{proposition}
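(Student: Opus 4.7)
The plan is to reduce the claim to a single, crisp fact: each iteration of the \textbf{while} loop inside \textsc{GetRandomValue} rejects with probability at most $1/2$, and rejections are independent across iterations. Once this is established, the tail bound $2^{-k}$ on the number of iterations beyond the forced minimum is immediate, and the timing/bit-count indistinguishability follows because every ``normal'' execution uses exactly the same number of iterations (and hence the same number of random bits) regardless of the database.

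The heart of the argument is the per-iteration rejection bound. By definition $g = \argmin_g\{2^g \geq t\}$, so $2^{g-1} < t \leq 2^g$. Inside each pass of the loop, the procedure draws a fresh string of random bits and forms $s = \sum_i r_i 2^{g-i}$, which yields a value distributed uniformly over the dyadic grid in $[0, 2^g)$ at precision $p$. A rejection occurs exactly when $s \geq t$, which covers a fraction
\[
\frac{2^g - t}{2^g} \;<\; \frac{2^g - 2^{g-1}}{2^g} \;=\; \frac{1}{2}
\]
of the sample space, independent of $p$ and of $t$ (as long as $t > 2^{g-1}$, which the choice of $g$ guarantees). Since the random bits used in different iterations are drawn independently, the event that all of the first $k$ passes reject is a product of $k$ independent events each of probability at most $1/2$, giving a joint probability of at most $2^{-k}$. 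Equivalently, with probability at least $1 - 2^{-k}$, some pass within the first $k$ iterations produces an $s < t$ (which is recorded as $s^*$ on its first occurrence).

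Finally, I would translate this iteration count bound into the stated timing/side-channel guarantee. The padding mechanism forces the loop to run until both a valid sample has been observed and $k$ iterations have elapsed. On the event that a valid $s$ appears within the first $k$ passes --- which has probability at least $1 - 2^{-k}$ by the previous step --- the loop runs exactly $k$ times, samples exactly $k(p+1)$ random bits, and performs an identical sequence of arithmetic operations regardless of the underlying $t$ (and hence regardless of the database). Hence any adversary observing only wall-clock time or random-bit consumption sees a distribution that is identical across databases, except on the ``tail'' event of probability at most $2^{-k}$, which is the statement of the proposition.

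The main obstacle I anticipate is purely bookkeeping rather than conceptual: one must verify that the rejection probability bound $\Pr[s \geq t] < 1/2$ genuinely holds for the discrete distribution over dyadic rationals produced by the bit-sampling loop (as opposed to the idealized continuous uniform on $[0,2^g)$), and one must be careful to interpret ``more than $k$ times'' as ``more than the forced-minimum $k$ iterations,'' so that the padding-based mitigation is correctly tied to the tail event whose probability is being bounded. Both points are straightforward but deserve explicit mention to avoid ambiguity.
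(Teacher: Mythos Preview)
Your proposal is correct and matches the paper's approach: the paper does not prove this proposition separately, but the key step appears in the proof of Lemma~\ref{lemma:normalizesampling}, where it is observed that a rejection ($s \geq t$) can occur only if the first random bit drawn is $1$ (since $t > 2^{g-1}$), giving the per-iteration rejection bound of $1/2$ and hence the $2^{-k}$ tail bound by independence. Your direct computation of the rejection fraction $(2^g - t)/2^g < 1/2$ is the same argument in slightly different clothing; the paper's first-bit phrasing is marginally slicker and also dispatches the discrete-versus-continuous worry you flagged, since it depends only on a single fair coin flip.
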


\noindent It is critical to note that the timing channel mitigated is due to differences in logic, not differences in the underlying timing of floating point arithmetic operations, which could still be a source of timing discrepancy, (e.g., \cite{andrysco2015subnormal}).

A second potential source of timing channel is the concentration of weight, i.e., how many elements can be eliminated in each iteration of the loop in Line \ref{line:loop}. Fortunately, the implementation of  Algorithm \ref{alg:psweightedsample} can be modified to use all of the bits of $s^*$ in the first iteration, and make a single \textit{complete} pass through the cumulative weights. More concretely, this modification consists of replacing Line 11 with $s \leftarrow s^*$, and setting the initial value of $j \leftarrow p$ in Line 6. This modification ensures that the amount of work done by the implementation is the same, even if the concentration of the weights is different. This procedure can be optimized by terminating the scan of the cumulative weights early, at the cost of re-introducing the weight concentration timing channel.

\subsubsection*{Optimization}
Although timing channels are problematic in many cases, if the procedure is run in a batch mode or a setting in which no fine-grained timing or randomness use information can be accessed by an adversary, it is possible to optimize the sampling procedure at the cost of a more significant timing channel. Simply put, one need not draw all $p$ bits of randomness at once, and can instead draw one bit at a time and terminate once a single value has been identified. Algorithm \ref{alg:optimizedweightedsample} in the appendix outlines this optimization in detail.

\subsection{Putting it all together}
We now have all of the ingredients to specify the implementation of the exponential mechanism. Algorithm \ref{alg:b2exp} outlines the procedure. We leave the error handling logic in the case that inexact arithmetic occurs unspecified. Depending on timing channel concerns, the database curator could choose to retry at higher precision or to take some other action to debug what went wrong. As specified, inexact arithmetic or errors should only occur if there are bugs in the implementation.

\begin{algorithm}[htbp]
  \caption{Base-2 exponential mechanism}
  \label{alg:b2exp}
\begin{algorithmic}[1]
  \LineComment{\textbf{Inputs}: data independent parameters $u_{min}$, $u_{max}$, $\eta(x,y,z)$, $o_{max}$, data dependent parameters $u$ a utility function, $O$ the set of outcomes and $k$ a timing channel mitigation parameter.} 
  \LineComment{\textbf{Outputs}: $o \in O$ sampled according to the probability distribution of the base-2 exponential mechanism or an error.}
    \Procedure{B2ExponentialMechanism}{$u_{min}$, $u_{max}$, $o_{max}$, $\eta(x,y,z)$, $u$, $O$, $k$}
    \State Check that $\eta(x,y,z)$ is a valid parameter choice
    \State $p \leftarrow (\max(1,|u_{min}|)  + \max(1,|u_{max}|))z(y + b_x) + o_{max}$
    \State Initialize an empty utility list $U$
    \For{$o \in O$}
    \State Append $clamp(u(o),u_{min},u_{max})$ to $U$
    \EndFor
    \State Begin monitoring inexact arithmetic
    \State $b \leftarrow 2^{-\eta}$
    \State Initialize an empty weight list $W$
    \For{$i \in |O|$}
        \State $w_o \leftarrow b^{U[i]}$
        \State Append $w_o$ to $W$
    \EndFor
    \State $i^* \leftarrow \textsc{NormalizedSample}(W,p,k)$
    \If{inexact arithmetic or $i^{*}$ is error}
        \State (Error handling logic)
    \EndIf
    \State $o^* \leftarrow O[i^*]$
    \State \textbf{return} $o^*$
    \EndProcedure
  \end{algorithmic}
\end{algorithm}

We state the following proposition to formally characterize the behavior of Algorithm \ref{alg:b2exp}:

\begin{proposition}
Given parameters $u_{min}, u_{max}, o_{max}, \eta, O, k$ and $u$ determined independently of the database such that $\Delta u \leq 1$, 
\textsc{B2ExponentialMechanism} either (1) outputs an element from $O$ sampled according to the probability of the Exponential Mechanism on the utility function $clamp(u,$ $u_{min},u_{max})$ or (2) outputs an error if the precision is insufficient or inexact arithmetic occurs. Furthermore, except with probability $2^{-k}$, the mechanism does not provide a useful timing or randomness side channel if precision is sufficient and no inexact arithmetic occurs.
\end{proposition}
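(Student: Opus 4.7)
The plan is to decompose the claim into three sub-claims and discharge each by appealing to a previously established lemma, with the algorithm's control flow providing the glue. First, I would argue that the precision $p$ computed on line 3 is sufficient for exact arithmetic. This is exactly the statement of Lemma \ref{lemma:minprec}: clamping $u$ to $[u_{min},u_{max}]$ keeps every utility in the declared range (by the clamping proposition, sensitivity is preserved and values lie in $[u_{min},u_{max}]$), and $\eta$ is of the prescribed form $-z\log_2(x/2^y)$, so Lemma \ref{lemma:minprec} guarantees that any subset sum of the weights $2^{-\eta u}$ can be written exactly with $p$ bits. Hence, provided no implementation bug intervenes, the inexact-arithmetic monitor never triggers, and the computed base $b = 2^{-\eta}$, each weight $w_o = b^{U[i]}$ (via repeated squaring on integer utilities), and each cumulative sum are represented exactly.

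Second, conditional on exact arithmetic, I would invoke Lemma \ref{lemma:normalizesampling} applied to the weight list $W$. That lemma says \textsc{NormalizedSample} returns an index $i^*$ distributed according to $w_i / \sum_j w_j$, which by construction equals
\[
\frac{2^{-\eta \cdot \mathrm{clamp}(u,u_{min},u_{max})(d,o_i)}}{\sum_{o\in O} 2^{-\eta \cdot \mathrm{clamp}(u,u_{min},u_{max})(d,o)}},
\]
i.e.\ the $|_2$ exponential mechanism distribution on the clamped utility. Thus the element $o^* = O[i^*]$ returned on line 19 is sampled from exactly the advertised distribution, establishing outcome (1). For outcome (2), I would simply observe that the only two return paths are through the error-handling branch on line 17, which fires precisely when either the arithmetic monitor flagged an inexact step or \textsc{NormalizedSample} returned its error sentinel. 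By Lemma \ref{lemma:normalizesampling} the latter occurs only when precision was insufficient, so the disjunction in the proposition is exhaustive.

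Third, for the timing/randomness side-channel claim, I would appeal directly to Proposition \ref{prop:timingchannel}. Conditional on sufficient precision and exact arithmetic, the only data-dependent control flow in \textsc{B2ExponentialMechanism} lies inside \textsc{NormalizedSample}, and within that routine the only data-dependent loop whose iteration count depends on the database is the rejection loop inside \textsc{GetRandomValue} (the outer \textbf{while} in Algorithm \ref{alg:psweightedsample} terminates deterministically once the weights are exact, and the loop over $O$ for clamping and weight computation runs a fixed $|O|$ times with identical work per iteration). The rejection loop is forced to run at least $k$ times by the $c<k$ guard, and by Proposition \ref{prop:timingchannel} the probability that it needs to run more than $k$ times is at most $2^{-k}$. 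Hence except on an event of probability at most $2^{-k}$, the running time and number of random bits consumed are a function only of the public parameters $(u_{min},u_{max},o_{max},\eta,|O|,k,p)$, which yields the side-channel bound.

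The main obstacle, and the step requiring the most care, is verifying that no other data-dependent branch slips in. In particular, one must check that the \textbf{for} loops computing $U$, $W$, and the cumulative weights $c_i$ perform identical work per index regardless of the utility values (true because clamping, exponentiation by a bounded integer, and addition in fixed precision are fixed-cost under our precision bound), and that within \textsc{NormalizedSample} the early-termination behavior of the binary-search loop is handled by the padding modification described after Proposition \ref{prop:timingchannel} (or else absorbed into the ``useful timing channel'' caveat). Once these are checked, the three pieces compose cleanly into the proposition.
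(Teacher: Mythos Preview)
The paper does not supply a standalone proof of this proposition; it is stated as a capstone summarizing the correctness properties established piecemeal by Lemma~\ref{lemma:minprec} (precision sufficiency), the clamping proposition, Lemma~\ref{lemma:normalizesampling} (sampling correctness and error detection), and Proposition~\ref{prop:timingchannel} (timing bound). Your decomposition into exactly these three sub-claims, and your invocation of each supporting result at the appropriate step, is precisely the argument the paper implicitly intends, so your proposal is correct and matches the paper's approach; your final paragraph checking for stray data-dependent branches is in fact more explicit than anything the paper writes down.
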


\section{Implementation}\label{sec:implementation}

Reference implementations in Python and Rust are included in the accompanying~\github. Please see the repository for detailed code structure and additional documentation.

\subsubsection*{Exact Arithmetic}
The goal of our implementation is to ensure that all arithmetic is exact. To that end, we use the GNU Multiple Precision Arithmetic (GMP) and the GNU MPFR libraries for high precision integer and floating point arithmetic \cite{Granlund12,fousse2007mpfr}. We access these libraries via the $\mathtt{gmpy2}$ interface for Python and the $\mathtt{gmp-mpfr-sys}$ and $\mathtt{Rug}$ crates for Rust \cite{gmpy2,SpiteriRug, SpiteriGMP}. 
MPFR provides a helpful set of flags indicating when operations are inexact, or result in overflow, underflow, truncation, etc. We use these extensively to ensure that our implementations are correct and exact. Under normal operation, we do not expect any exceptions to be raised due to inexact arithmetic. These libraries also provide methods for determining the maximum precision achievable on a given system at run-time, preventing any errors due to exceeding any system-specific thresholds and making the code easier to audit. In the case of an exception, the system designer can either choose to retry with higher precision, or take the failure as an indication that debugging is needed. 
We note that the problem of errors and exceptions being used as a side channel in DP is well known, and other techniques such as subsample and aggregate can help provide privacy-preserving error handling for these cases. The critical point is that the exception raised is the appropriate signal to the database administrator that an error has been exploited either intentionally or unintentionally.

\subsubsection*{Separation of data independent and dependent logic}
We have prioritized minimizing and clearly documenting side channels and our implementations endeavor to separate logic that should be data independent from data dependent logic. In particular, we require the caller to specify maximum and minimum utility bounds ($u_{min}$ and $u_{max}$) and a limit on the size of the outcome space ($o_{max}$), which allows us to  determine the working precision before computing any utilities or examining private data. This ensures that the working precision, which influences running time significantly, does not introduce a side channel. We also use these bounds to enforce limits on the size of the outcome space and to clamp outputs of the provided utility function to a prespecified range. Note, we do not attempt to verify the sensitivity of any provided utility function, and it is up to the caller to ensure appropriate sensitivity.

\subsubsection*{Specifying privacy parameters}
Privacy parameters are expressed as a set $\mathtt{\{x,y,z\}}$ to ensure that $2^{\eta}$ for $\eta = -z\log(\frac{x}{2^y})$ can be computed exactly.

\subsubsection*{Source of randomness}
Our implementations permit significant flexibility in the source of randomness used for sampling. In Python, the random number generator is passed in as a function pointer and is expected to produce individual random bits. In Rust, we use the $\mt{ThreadRandGen}$ trait defined by $\mt{Rug}$, which requires a $\mt{gen}$ function producing a random $\mt{u32}$. This allows the user to write simple wrappers for any desired randomness source. 


\subsubsection*{Logic}
The logic of our implementations is nearly identical to the naive exponential mechanism, but  includes randomized rounding of the utility function, utility clamping (i.e., clamping utilities to the pre-specified range) as well as additional monitoring of exact arithmetic.

\subsubsection*{Language-specific Details}
We provide reference implementations written in Rust and Python. (See Figures \ref{fig:py_usage} and \ref{fig:rust_usage}.) 
Although the logic of the implementations is for the most part identical, there are several differences worth noting. In Python we implement the exponential mechanism as a class with associated sampling and initialization functions. The intent of the implementation is to explicitly demonstrate the separation between data dependent and independent logic. For example, the caller first initializes an instance of the $\mathtt{ExpMech}$ class with the utility bounds, privacy parameter and maximum outcomes space size and then sets the utility function (which accesses private data) in a subsequent call, before sampling from the mechanism.   The Python mechanism also does not implement the timing channel or insufficient precision logic exactly specified in Algorithm \ref{alg:psweightedsample}, although it does include optimized sampling as specified in Algorithm \ref{alg:optimizedweightedsample}. The Python implementation is sufficiently close to the pseudocode given in Algorithm \ref{alg:psweightedsample} to be suitable for comparison with the Rust implementation which faithfully reproduces the algorithms as specified.

\begin{figure}[h]
    \centering
\begin{lstlisting}[language=Python]
# Example Python usage
from exponential_mechanism.expmech import *
import numpy as np

# Initialize outcomes and utility
u = lambda x : abs(0 - x)
# outcome space [-8, 8] in 0.3125 increments
gamma = 2**(-5)
k = 256
O = [i*gamma for i in range(-k,k+1)] 
# random bit generator using numpy randomness
rng = lambda : np.random.randint(0,2)

# Initialize the mechanism with eta = -log_2(1/2)
# and utility bounds
e = ExpMech(rng,eta_x=1,eta_y=1, eta_z=1,\ 
            u_min=0, u_max=16, max_o=len(O))
# set the utility function after initialization of precision, etc.
e.set_utility(u) 
# run the mechanism
result = e.exact_exp_mech(O)
\end{lstlisting}
    \caption{Example usage of Python mechanism to sample from the range $[-8,8]$ in increments of $2^{-5}$ for the utility function $u(o) = |0.0-o|$.}
    \label{fig:py_usage}
\end{figure}

\begin{figure}[H]
    \centering
\begin{lstlisting}[language=Rust]
// Example Rust usage
use b2dp::{exponential_mechanism, Eta, GeneratorOpenSSL};

// Define a utility function
fn util_fn (x: &u32) -> f64 {
     return (*x as f64).abs();
}
// Construct a privacy parameter eta = -log_2(1/2)
let eta = Eta::new(1,1,1).unwrap(); 
// Set bounds on the utility and outcomes
let utility_min = 0; 
let utility_max = 10;
let max_outcomes = 10;
let rng = GeneratorOpenSSL {}; // use OpenSSL for randomness
let outcomes: Vec<u32> = (0..max_outcomes).collect();
let sample = exponential_mechanism(eta, 
                        &outcomes, util_fn, utility_min, 
                        utility_max, max_outcomes, rng, 
                        Default::default())
                        .unwrap();
\end{lstlisting}
    \caption{Example usage of Rust mechanism with utility function $u(o) = |o|$ and $O = \{0,\ldots, 10\}$ and default options for optimization and timing channel protections.}
    \label{fig:rust_usage}
\end{figure}
The Rust implementation prioritizes performance in addition to readability and simplicity of the code, and is intended for longer-term development. It separates  functionality like normalized sampling and exact arithmetic monitoring from the implementation of the exponential mechanism itself in order to make extensions to other applications easier in the future. The Rust implementation includes the minimum retry and insufficient precision logic outlined in Algorithm \ref{alg:psweightedsample}.


As we discuss below, each language has different overheads, and the performance characteristics of the sampling procedures vary between the implementations. 
Furthermore, each language relies on different interfaces to the underlying high precision arithmetic libraries, and use slightly different strategies for enforcing exact arithmetic. 
In Python, we configure the high precision arithmetic libraries to raise an exception if inexact arithmetic is performed. In Rust, we instead access the inexact arithmetic flags directly, and enforce exact arithmetic only at the end of the procedure. 

\subsection{Performance and Applications}\label{section:laplace}

All timing tests were run using the $\mathtt{Criterion}$ crate for Rust and a custom timing script for Python on a Linux virtual machine with 2 cores and 2GB of RAM. We used randomness generated by $\mathtt{OpenSSL}$ in the Rust configurations, and $\mathtt{numpy}$ randomness in Python. In general, Rust far outperformed Python, but the times presented are not strictly comparable, given the differences in benchmarking. However, the times presented are indicative of performance on low-end consumer hardware. We tested 
several configurations, outlined in Table \ref{table:configs}. Unless otherwise noted, we chose the retry parameter $k=1$ for timing channel mitigation.
\begin{table}[t]
  \centering%
  \begin{tabular}{|l|p{13cm}|}\hline
    $\mathtt{PythonNaive}$ & naive base-$e$ Python implementation \\ \hline
    $\mathtt{Python}$ & base-2 Python. \\ \hline
    $\mathtt{PythonOpt}$ & base-2 Python using Algorithm \ref{alg:optimizedweightedsample} for weighted sampling. \\ \hline
    $\mathtt{PythonEmp}$ & $\mathtt{Python}$ using empirical precision (Algorithm \ref{alg:precision}).\\ \hline
    $\mathtt{RustNaive}$ & naive base-$e$ Rust implementation \\ \hline
    $\mathtt{Rust}$ & base-2 Rust. \\ \hline
    $\mathtt{RustOpt}$ & base-2 Rust using Algorithm \ref{alg:psweightedsample} with early termination for weighted sampling. \\ \hline
    $\mathtt{RustEmp}$ & $\mathtt{Rust}$ using empirical precision (Algorithm \ref{alg:precision}). \\ \hline
  \end{tabular}
  \caption{Testing configurations}\label{table:configs}
\end{table}

Interestingly, optimized sampling (Algorithm \ref{alg:optimizedweightedsample}) showed degraded performance compared with Algorithm \ref{alg:psweightedsample} in Rust. This is likely due to the increased overhead of the binary search loop in comparison with sampling additional bits of randomness in Rust. As such, we sample all randomness at once, and we use the early termination optimization for Algorithm \ref{alg:psweightedsample} in our optimized Rust configuration. 

We consider the following application test cases to illustrate different properties of the base-2 exponential mechanism implementations.

\begin{figure}
\centering
\begin{subfigure}{.45\textwidth}
  \centering
    \includegraphics[width=\textwidth]{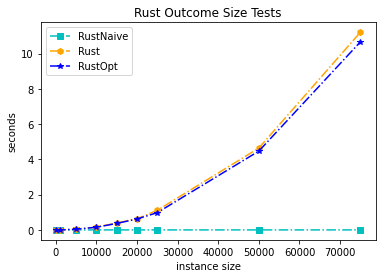}
    \caption{Rust: $O = [n]$ with $u(o)=o$.}
    \label{fig:rust_outcome}
\end{subfigure}%
\begin{subfigure}{.45\textwidth}
  \centering
    \includegraphics[width=\textwidth]{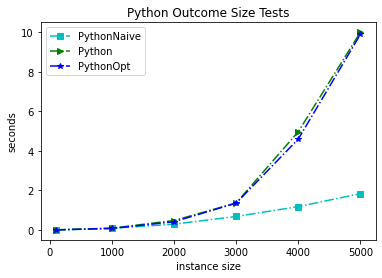}
    \caption{Python: $O = [n]$ with $u(o)=o$.}
    \label{fig:py_outcome}
\end{subfigure}
\caption{}
\label{fig:outcomes}
\end{figure}


\subsubsection*{Outcome Space Sizes}
The simplest test we consider is choosing an outcome set $O = [n]$ with a utility function $u(o)=o$. As the instance size $n$ increases, both the utility range and the outcome space size increase. Figures \ref{fig:rust_outcome} and \ref{fig:py_outcome} illustrate the results for Python and Rust. Although $\mathtt{Python}$ and $\mt{PythonOpt}$ are competitive with $\mt{PythonNaive}$, $\mt{RustNaive}$ is orders of magnitude faster than $\mt{Rust}$ and $\mt{RustOpt}$. (We omit $\mt{RustNaive}$ and $\mt{PythonNaive}$ from later figures with the exception of the Laplace test case, as the conclusions are not significantly different in other test cases.) This test demonstrates that the mechanism (particularly in Rust) is practical for large instance sizes, as even in a low-end configuration, instance sizes of $75$k can be completed in about 10 seconds.

\begin{figure}
\centering
\begin{subfigure}{.45\textwidth}
    \centering
    \includegraphics[width=\textwidth]{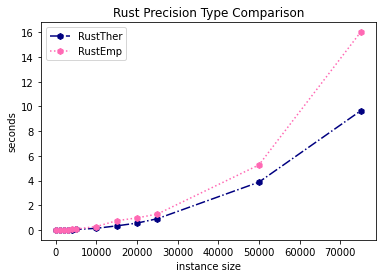}
    \caption{Rust: $O = [n]$ with $u(o)=o$.}
    \label{fig:rust_precision}
\end{subfigure}%
\begin{subfigure}{.45\textwidth}
    \centering
    \includegraphics[width=\textwidth]{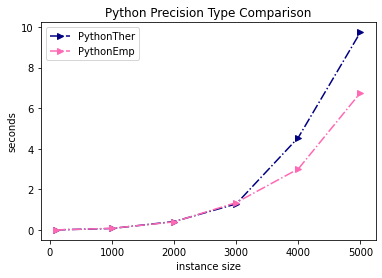}
    \caption{Python: $O = [n]$ with $u(o)=o$.}
    \label{fig:py_precision}
\end{subfigure}
\caption{}
\label{fig:precision}
\end{figure}


\subsubsection*{Precision Determination Method}
We repeated the outcome space size tests for $\mt{RustEmp}$ and $\mt{PythonEmp}$, to demonstrate the difference in performance due to the relative trade-off between the computational overhead of the empirical precision determination versus the reduction in precision. Figures \ref{fig:rust_precision} and  \ref{fig:py_precision} outline the results. 
$\mathtt{PythonEmp}$ significantly outperforms $\mathtt{PythonTher}$ as the instance size increases, indicating that the computational overhead of the precision determination is outweighed by the improved sampling performance. In contrast, $\mathtt{RustEmp}$ performs significantly worse, indicating that the precision determination overhead is not outweighed by the reduced precision. (The Laplace use case, below, has different behavior depending on the precision determination method, but no noticeable difference is observed for the Utility Ranges test.)

\subsubsection*{Utility Ranges}
Given a fixed outcome space size $|O|=k+1=1000$, we test how increasing the range of utilities $[0,1000+n]$ impacts performance. We take $O = \{0\} \cup \{i + n \mid  i \in [k]\}$ and assign $u(o)=o$.
Figures \ref{fig:rust_range} and \ref{fig:py_range} illustrate the outcomes of the test for the optimized and non-optimized Rust and Python implementations. As expected, the growth curve is linear, as the only increase in cost as the instance size $(n)$ increases is the cost of using increased precision for computation. This test shows the benefit of optimized sampling in Python, and the performance of $\mathtt{PythonOpt}$ diverges from the performance of $\mathtt{Python}$ as the instance size increases. 

\begin{figure}[h]
\centering
\begin{subfigure}{.45\textwidth}
     \centering
    \includegraphics[width=\textwidth]{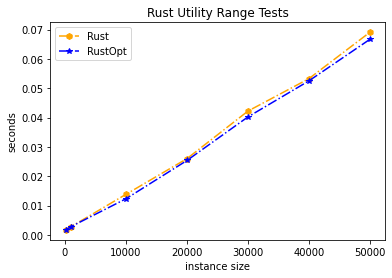}
    \caption{Rust: $k=1000$, $O = \{0\} \cup \{i + n $ $| i \in [k]\}$, $u(o)=o$.}
    \label{fig:rust_range}
\end{subfigure}%
\quad 
\begin{subfigure}{.45\textwidth}
    \includegraphics[width=\textwidth]{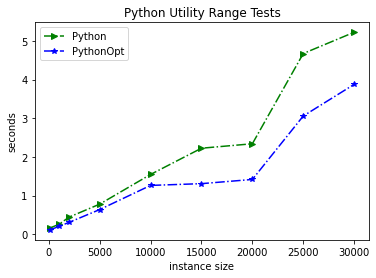}
    \caption{Python: $k=1000$, $O = \{0\} \cup \{i + n | i \in [k]\}$, $u(o)=o$.}
    \label{fig:py_range}
\end{subfigure}
\caption{}
\label{fig:range}
\end{figure}

\subsubsection*{Laplace}
Recall that the Laplace mechanism (Mechanism \ref{mech:laplace}) computes a function $f$ of the database and then adds noise drawn from the Laplace distribution, described by probability density function $\mathsf{Lap(t|\frac{\Delta f}{\varepsilon})}=\frac{\varepsilon}{2\Delta f}e^{-\frac{\varepsilon |t|}{\Delta f }}$. 
Notice that the range of possible noise values is infinite - although the probability of observing a sample $\geq 14$ is less than one in a million for $\varepsilon=1$. 
The standard method for sampling from the Laplace distribution is to use inverse transform sampling, i.e., to sample a uniform value in $U \in [0,1]$ and solve for $CDF(t)=U$, where $CDF$ refers to the cumulative density function of the Laplace distribution.\footnote{Recall that the cumulative density function is the integral of the probability density function from $-\infty$ to $t$, and in the case of Laplace is very simple to compute.} In a simple implementation, this amounts to computing $\tau = \frac{\varepsilon}{2\Delta f}\ln(1-U)$ for a uniformly random values $U \in [0,1)$.\footnote{See \cite{mironov2012significance} for a longer discussion of this technique for the Laplace mechanism in the context of DP and \cite{ridout2009generating} for a more general discussion of sampling from the Laplace distribution.}

As discussed extensively in \cite{mironov2012significance}, the danger of implementing this mechanism using standard (even highly accurate) implementations of $\ln$ is that there are certain \textit{gaps} in the low order bits of the values produced by $\ln(x)$ for $x \in [0,1]$ even when $x$ is drawn at regular intervals. These gaps result in differences in the support (possible outcomes) of the mechanism on adjacent databases, which can then be used to distinguish possible values of the original output of $f$. By ``snapping'' $f + \tau$ to a fixed set of intervals, these low order bits are safely eradicated. (Further details on implementing the snapping mechanism can be found in \cite{Covington19}.)

\begin{figure}[t]
\centering
\begin{subfigure}{.45\textwidth}
    \centering
    \includegraphics[width=\textwidth]{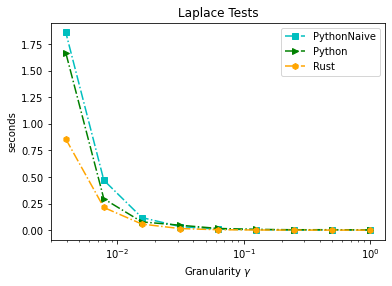}
    \caption{Comparison of $\mt{PyNaive}$, $\mt{Python}$ and $\mt{Rust}$ for $O=\{B_L + \gamma_i \mid B_L + \gamma_i \leq B_U\}$, $u(o)=|0.0-o|$.}
    \label{fig:combined_laplace}
\end{subfigure}%
\quad
\begin{subfigure}{.45\textwidth}
    \centering
    \includegraphics[width=\textwidth]{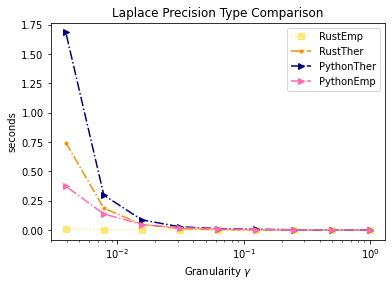}
    \caption{Comparison of $\mt{PythonEmp}$, $\mt{Python}$, $\mt{RustEmp}$ and $\mt{Rust}$ for $O=\{B_L + \gamma_i \mid B_L + \gamma_i \leq B_U\}$, $u(o)=|0.0-o|$.}
    \label{fig:combined_laplace_precision}
\end{subfigure}
\caption{}
\label{fig:laplace}
\end{figure}

However, the Laplace mechanism can also be simulated via the exponential mechanism by setting the utility function to be $u(d,o) = |f(d) - o|$, although there is some subtlety in the selection of $O$. We implement the clamped discrete Laplace mechanism, which takes in a lower bound ($B_L$), an upper bound ($B_U$) and a granularity ($\gamma$) and sets $O = \{B_L + \gamma_i \mid B_L + \gamma_i \leq B_U\}$. By setting this outcome space before the target value $f(d)$ is observed, we maintain the independence of the outcome set and the utility function. For our tests, we chose $B = 10$, and varied the size of $\gamma$. Figure \ref{fig:combined_laplace} compares the results of comparing $\mt{Rust}$ with and $\mt{Python}$ with $\mt{PythonNaive}$. In this case, $\mt{Python}$ actually outperforms $\mt{PythonNaive}$, likely due to the improved performance of high precision arithmetic in MPFR/$\mt{gmpy2}$ versus built-in types. Furthermore, as illustrated in Figure \ref{fig:combined_laplace_precision},  $\mt{PythonEmp}$ outperforms $\mt{Python}$ \textit{and} $\mt{Rust}$. Interestingly, $\mt{RustEmp}$ performs extremely well. This is likely due to the limited outcome space sizes (which reduces the overhead of the empirical precision computation). Given the differences in the benchmarking setup, we don't read too much into the exact timing comparison, but it is interesting to note that this is the only testing scenario in which Python approaches the performance of Rust. This test has a very limited utility range and outcome space size ($<7$k), and as such the benefits of the Rust implementation seem to be outweighed by the basic overheads of running the mechanism. 

Other than performance, a critical concern with using the base-2 exponential mechanism for Laplace is how much the use of randomized rounding impacts the accuracy of the mechanism. In brief, randomized rounding has a limited impact and the rounded mechanism behaves very similarly to the unmodified, inexact mechanism in the case of simulating the Laplace distribution. Furthermore, a small outcome set size ($<1,000$) is sufficient to closely approximate the true Laplace distribution.

Figure \ref{fig:rrcomparisons} illustrates the distribution of outputs simulating the Laplace distribution for the randomized rounding mechanism and the naive mechanism. 
The distributions are nearly identical, and  Kolmogorov-Smirnov test statistic for the pair of distributions is also small $(0.02)$, indicating that there isn't sufficient evidence to conclude that the distributions are different. In the Appendix, 
Figures \ref{fig:pointwise} and \ref{fig:boundcomparison} illustrate how closely the lower and upper bounds given by Lemma \ref{lemma:heuristic} bound the target probability of the naive mechanism. In particular, the width of the bound is closely related to the granularity of the outcome space, and as granularity increases, the width of the bound decreases, giving smaller point-wise error bounds.

\begin{figure}%
\centering
\label{fig:rrvsnaive}%
\includegraphics[width=0.45\textwidth]{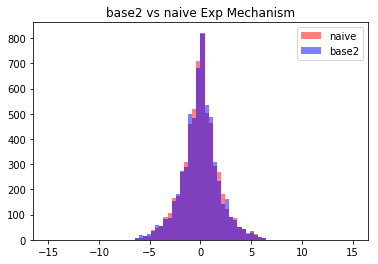}
\caption{A comparison of the outputs of the randomized rounding base-2 exponential mechanism with the unmodified exponential mechanism (the naive mechanism) for sampling from the Laplace distribution for $n=6000$ trials. 
($\varepsilon=2\ln(2)$, $\gamma=2^{-4}$, $O = \{\gamma i, -\gamma i \mid i \in [100]\}$.)}
\label{fig:rrcomparisons}
\end{figure}


\subsubsection*{Timing channel protections}
To demonstrate the impact of minimum retry logic on timing channels, we constructed two utility functions $u_1$ such that $u_1(o) = 1$ and $u_0$ such that $u_0(o_1) = 0$ and $u_0(o_{i>1}) = 1$ and ran them on an outcome space of size $256$ with $\eta = -\log_2(\frac{1}{2})$. This results in a total weight of $128.5$ for $u_0$ and a total weight of $128$ for $u_1$. Thus, $u_1$ will always succeed in choosing a sample in $[0,128)$, but $u_0$ will have rejection probability near $\frac{1}{2}$.
Figure \ref{fig:timing_channel} illustrates the difference in timing for these two utility functions with varying $k$ dictating the minimum number of retries. As we would expect from Proposition \ref{prop:timingchannel}, although there is a noticeable timing difference when $k=1$, as $k$ increases, the timing converges. This both demonstrates the potential for a malicious adversary to exploit the timing channel (if protections are not in place), and that the minimum rejection method does provide empirically validated protection.
\begin{figure}
    \centering
    \includegraphics[width=0.5\textwidth]{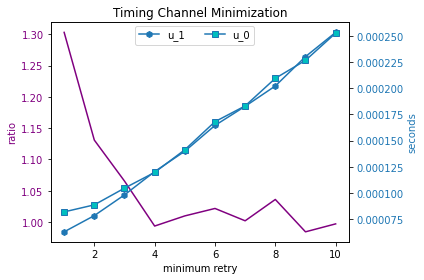}
    \caption{Rust: Comparison of timing differences of utility functions $u_0(o)=0$ for all $o \in O$ and $u_1(o_{i>1} = 0)$, $u_1(o_1)=1$ with privacy parameter set such that the total weight for $u_1=128.5$ and for $u_0=128$. $u_0$ always successfully samples within the required range, whereas $u_1$ has near maximal rejection probability. With no retries, $u_1$ is consistently about $1.5\times$ slower, but as the number of minimum retries increases, the times converge.}
    \label{fig:timing_channel}
\end{figure}

\section{Discussion and future work}
Floating-point issues in implementations of the Laplace mechanism are well-known. In this work, we have demonstrated new floating point issues which arise in the exponential mechanism even when the floating point computations themselves are not revealed. To address these issues, we have specified an alternative view of DP using base-2 instead of base $e$ exponentiation. 
Base-2 DP has the benefit of allowing exact implementation of the exponential mechanism, without diverging from the logic of the original base-$e$ mechanism. 
By using exact computations, monitoring of correctness, which yields privacy, is more straightforward, and the mechanism is much harder to subvert as it is not vulnerable to attacks exploiting inexact arithmetic. We have also shown that exact implementation permits a rich set of privacy parameters and both integer and non-integer utility functions. The base-2 mechanism is reasonably efficient. Even on low-end consumer hardware, our reference implementation can handle instance sizes on the order of $100$k in seconds.



This work is a step towards a programming paradigm of exact arithmetic for DP. Ongoing developments (available in the accompanying \github) include: implementation of the integer partition mechanism (and extension to pure-DP) of \cite{blocki2016differentially} and extensions of floating point attacks to the sparse vector technique and mitigation via exact implementation of clamped discrete Laplace and noisy threshold.
Future directions include: integration with open-source DP libraries, e.g. OpenDP \cite{OpenDP}, to more easily construct bounded-sensitivity utility functions, more robust timing channel mitigation strategies, additional methods for converting between and composing base-2 and base-$e$ privacy parameters.

\bibliography{biblio}

\appendix
\section{Appendix}
\subsection{Additional Proofs}\label{appendix:proofs}

\subsubsection*{Minimum precision}
Proof of Lemma \ref{lemma:minprec}.

\begin{proof}
Writing $2^{-\eta u}$ requires no more than $\max(1,|u|)z(y + b_x)$ bits of precision. Thus to write the largest weight, corresponding to $u_{min}$, we require no more than $\max(1,|u_{min}|)z(y + b_x)$ bits. The largest possible value any combination of weights could take on is $o_{max} 2^{-\eta u_{min}}$, and adding $2^{-\eta u_{min}}$ to itself $o_{max}$ times requires an extra $o_{max}$ bits of precision, as each addition increases the precision by at most one bit. Thus, the largest possible combination requires $o_{max} + \max(1,|u_{min}|)z(y + b_x)$ bits. This corresponds to the largest possible number of bits needed before the decimal. To compute the largest possible number of bits needed after the decimal, we consider the smallest possible weight, $2^{-\eta u_{max}}$, which will require $|u_{max}|z(y + b_x)$ bits. Thus in the worst case, we require maximum precision on both sides of the decimal, yielding the desired maximum bound of $(\max(1,|u_{min}|) + \max(1,|u_{max}|))z(y + b_x) + o_{max}$.
\end{proof}

\paragraph{Privacy of Randomized Rounding.}
Proof of Lemma \ref{lemma:randomizedrounding}

\begin{proof}

Suppose to implement randomized rounding that we draw a number $s$ uniformly at random from $[0,1)$, and round up if $s \leq |u(d,o)-\lfloor{u(d,o)} \rfloor|$ (and otherwise round down). 
Fix a particular choice of $s$. Consider a pair of adjacent databases $d$ and $d'$ such that $u(d',o)>u(d,o)$. Notice that it is impossible for the rounding procedure to result in a difference in composed utility of more than $\alpha$. This follows from observing that there are two cases: either $\lfloor{u(d',o)} \rfloor - \lfloor{u(d,o)} \rfloor < \alpha$, in which case any rounding results in difference at most $\alpha$, or $\lfloor{u(d',o)} \rfloor - \lfloor{u(d,o)} \rfloor = \alpha$.

If $\lfloor{u(d',o)} \rfloor - \lfloor{u(d,o)} \rfloor = \alpha$, then $\lfloor{u(d',o)} \rfloor - \lfloor{u(d,o)} \rfloor \geq u(d',o) - u(d,o)$, and thus 
 $u(d',o)-\lfloor{u(d',o)} \rfloor \leq u(d,o)-\lfloor{u(d,o)} \rfloor$. 
Therefore $s$ is  in one of three regions:
\begin{enumerate}[leftmargin=*]
    \item $s \in [0, u(d',o)-\lfloor{u(d',o)} \rfloor]$, which results in both values rounded up,
    \item $s \in (u(d',o)-\lfloor{u(d',o)} \rfloor,u(d,o)-\lfloor{u(d,o)} \rfloor]$, which results in $u(d',o)$ rounded down and $u(d,o)$ rounded up, or
    \item $s \in (u(d,o)-\lfloor{u(d,o)} \rfloor,1)$, which results in both rounded down.
\end{enumerate}
Thus, for any $s$ rounding never results in a  difference between $u(d,o)$ and $u(d',o)$ greater than $\alpha$. (The symmetric argument follows for any $o$ such that $u(d,o)>u(d',o)$.) 
Take the utility function $u_S:=\rho(u(d,o))$ to be the utility function with fixed randomness $S$, i.e., the set of $s$ used for each rounding decision.
From the above, $\Delta u_S \leq \alpha$. Thus, the exponential mechanism with utility function $u_S$ is $2\alpha\varepsilon-$DP. Write $p_S(o)$ for the probability that the exponential mechanism with utility function $u_S$ outputs the element $o$. Taking $p(o)$ to be the probability that the randomized rounding exponential mechanism outputs $o$, we can therefore write
\[p(o) = \sum_{S \sim [0,1)^{|O|}}\Pr[S]p_S(o)\]
and for any adjacent database, we can write
\[p'(o) = \sum_{S \sim [0,1)^{|O|}}\Pr[S]p'_S(o)\]
where $S \sim [0,1)^{|O|}$ indicates the set of all possible random values $s \in [0,1)$ used for sampling. 
Because $\Delta u_S \leq \alpha$, we have that  $\frac{p_S(o)}{p'_S(o)}\leq e^{-2\alpha\varepsilon}$, so
\begin{align*}
  \frac{p(o)}{p'(o)}  &= \frac{\sum_{S \sim [0,1)^{|O|}}\Pr[S]p_S(o)}{ \sum_{S \sim [0,1)^{|O|}}\Pr[S]p'_S(o)} \\ &\leq \frac{\sum_{S \sim [0,1)^{|O|}}\Pr[S]e^{2\alpha\varepsilon}p'_S(o)}{ \sum_{S \sim [0,1)^{|O|}}\Pr[S]p'_S(o)}\\
  & = e^{2\alpha\varepsilon}
\end{align*}
\end{proof}

\paragraph{Monotonicity of Randomized Rounding.}
\begin{lemma}\label{lemma:monotone}
The randomized rounding exponential mechanism is monotone. That is, for $o_1,o_2 \in O$ if  $\Pr[\mathcal{M}(d) = o_1] \geq \Pr[\mathcal{M}(d) = o_2]$ then $\Pr[\mathcal{M}'(d) = o_1]\geq \Pr[\mathcal{M}(d) = o_2]$, where $\mathcal{M}$ is the unmodified exponential mechanism, and $\mathcal{M}'$ is the randomized rounding exponential mechanism (Mechanism \ref{mech:rrexpmech}).
\end{lemma}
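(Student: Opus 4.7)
The plan is to reduce the claim to a comparison of expected rounded weights and then dispatch two cases. First I would note that $\mathcal{M}$ samples with probability proportional to $e^{-\varepsilon u(d,o)}$, so the hypothesis $\Pr[\mathcal{M}(d) = o_1] \geq \Pr[\mathcal{M}(d) = o_2]$ is equivalent to $u_1 := u(d,o_1) \leq u_2 := u(d,o_2)$. Assuming this, the goal becomes $\Pr[\mathcal{M}'(d) = o_1] \geq \Pr[\mathcal{M}'(d) = o_2]$. Since the rounding randomness is drawn independently for each outcome, I would condition on $\rho(u(d,o))$ for every $o \notin \{o_1,o_2\}$; the resulting contribution $B := \sum_{o \notin \{o_1,o_2\}} e^{-\varepsilon \rho(u(d,o))}$ appears additively in the denominator and is fixed under this conditioning. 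Writing $R_i := \rho(u_i)$ for the remaining (independent) rounded utilities, it suffices to prove
\[
p_1 := \mathbb{E}\!\left[\frac{e^{-\varepsilon R_1}}{B + e^{-\varepsilon R_1} + e^{-\varepsilon R_2}}\right] \;\geq\; \mathbb{E}\!\left[\frac{e^{-\varepsilon R_2}}{B + e^{-\varepsilon R_1} + e^{-\varepsilon R_2}}\right] =: p_2.
\]

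I would then split based on how $u_1$ and $u_2$ sit in the integer lattice. In the easy case $\lceil u_1 \rceil \leq \lfloor u_2 \rfloor$ (which covers everything except when $u_1 < u_2$ are both non-integers in a common unit interval; integer endpoints can be checked to fall on this side), every realization of $(R_1,R_2)$ satisfies $R_1 \leq R_2$, so $e^{-\varepsilon R_1} \geq e^{-\varepsilon R_2}$ pointwise and the inequality holds before taking expectations. The only real work is the remaining case $\lfloor u_1 \rfloor = \lfloor u_2 \rfloor$ with both $u_i$ non-integer, where the rounded values can be out of order.

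For that tight case, let $a := e^{-\varepsilon \lfloor u_1 \rfloor}$ and $a' := e^{-\varepsilon \lceil u_1 \rceil}$ (so $a \geq a'$), and let $f_i := u_i - \lfloor u_i \rfloor$ (so $f_1 \leq f_2$). I would expand each expectation over the four independent up/down choices for $(R_1,R_2)$; the ``both up'' and ``both down'' contributions are identical in $p_1$ and $p_2$ and so cancel in $p_1 - p_2$. Combining the two surviving disagreement terms over the common denominator $B + a + a'$, the expression should collapse to
\[
p_1 - p_2 \;=\; \frac{(a - a')(f_2 - f_1)}{B + a + a'} \;\geq\; 0.
\]

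I expect the main obstacle to be this algebraic bookkeeping: verifying that the diagonal terms cancel exactly and that the cross terms factor cleanly into the displayed product. The intuition that makes both the cancellation and the sign believable is that a larger utility is rounded up with larger probability, so the asymmetry $f_2 \geq f_1$ effectively shifts probability mass from $o_2$ onto $o_1$, which is exactly the direction needed. Everything else---monotonicity of the exponential mechanism on integer utilities, and independence across outcomes of the rounding randomness---is immediate from the definitions.
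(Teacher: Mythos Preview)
Your proposal is correct and follows essentially the same route as the paper: condition on the rounding of all outcomes other than $o_1,o_2$, dispose of the case where the rounded utilities are always ordered, and in the remaining ``same unit interval'' case expand over the four rounding outcomes and compare cross terms. Your organization is in fact a bit cleaner than the paper's---you absorb the paper's second sub-case ($\lfloor u_1\rfloor+1=\lfloor u_2\rfloor$) into the easy pointwise case via the condition $\lceil u_1\rceil\le\lfloor u_2\rfloor$, and you observe exact cancellation of the diagonal terms rather than merely dropping them as nonnegative---but the underlying argument is the same.
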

\begin{proof}
Fix any set of rounding decisions for $o_{i\notin \{1,2\}}$, and take $S = \sum_{i \notin \{1,2\}}e^{-\varepsilon u(d,o)}$.

Notice that if $|u(d,o_1) - u(d,o_2)| \geq 1$, then rounding the utilities cannot change the relative ordering of the utilities. Thus, we only consider cases where $|u(d,o_2) - u(d,o_1)| \leq 1$.

For conciseness, write $p_i^\bot:=  |u(d,o_i) - \lceil{u(d,o_i)}\rceil |$, i.e., the probability that the utility is rounded down,  and $p_i^\top:=  1- p_i^\bot$, i.e., the probability that the utility is rounded up.
We can write the probability that $\M'$ outputs $o_1$ or $o_2$ as 
\begin{align*}
    \Pr[\M'(d) = o_1] =& p_1^\bot * p_2^\bot \frac{e^{-\varepsilon \lfloor{u(d(o_1)}\rfloor}}{S + e^{-\varepsilon \lfloor{u(d(o_1)}\rfloor} + e^{-\varepsilon \lfloor{u(d(o_2)}\rfloor}}  \\&+ p_1^\bot * p_2^\top \frac{e^{-\varepsilon \lfloor{u(d(o_1)}\rfloor}}{S + e^{-\varepsilon \lfloor{u(d(o_1)}\rfloor} + e^{-\varepsilon \lceil{u(d(o_2)}\rceil}} 
    \\&+p_1^\top * p_2^\bot \frac{e^{-\varepsilon \lceil{u(d(o_1)}\rceil}}{S + e^{-\varepsilon \lceil{u(d(o_1)}\rceil} + e^{-\varepsilon \lfloor{u(d(o_2)}\rfloor}} 
    \\& + p_1^\top * p_2^\top \frac{e^{-\varepsilon \lceil{u(d(o_1)}\rceil}}{S + e^{-\varepsilon \lceil{u(d(o_1)}\rceil} + e^{-\varepsilon \lceil{u(d(o_2)}\rceil}}\\
    \Pr[\M'(d) = o_2] =& p_1^\bot * p_2^\bot \frac{e^{-\varepsilon \lfloor{u(d(o_2)}\rfloor}}{S + e^{-\varepsilon \lfloor{u(d(o_1)}\rfloor} + e^{-\varepsilon \lfloor{u(d(o_2)}\rfloor}}  \\&+ p_1^\bot * p_2^\top \frac{e^{-\varepsilon \lceil{u(d(o_2)}\rceil}}{S + e^{-\varepsilon \lfloor{u(d(o_1)}\rfloor} + e^{-\varepsilon \lceil{u(d(o_2)}\rceil}} 
    \\&+p_1^\top * p_2^\bot \frac{e^{-\varepsilon \lfloor{u(d(o_2)}\rfloor}}{S + e^{-\varepsilon \lceil{u(d(o_1)}\rceil} + e^{-\varepsilon \lfloor{u(d(o_2)}\rfloor}} 
    \\& + p_1^\top * p_2^\top \frac{e^{-\varepsilon \lceil{u(d(o_2)}\rceil}}{S + e^{-\varepsilon \lceil{u(d(o_1)}\rceil} + e^{-\varepsilon \lceil{u(d(o_2)}\rceil}}
\end{align*}

To show monotonicity, it suffices to show that if $e^{-\varepsilon u(d,o_1)} \geq e^{-\varepsilon u(d,o_2)} $, then $\Pr[\M'(d) = o_1] - \Pr[\M'(d) = o_2] \geq 0$.

\begin{align*}
    \Pr[\M'(d) = o_1] &- \Pr[\M'(d) = o_2]  \\
    =&p_1^\bot * p_2^\bot \frac{e^{-\varepsilon \lfloor{u(d(o_1)}\rfloor} - e^{-\varepsilon \lfloor{u(d(o_2)}\rfloor}}{S + e^{-\varepsilon \lfloor{u(d(o_1)}\rfloor} + e^{-\varepsilon \lfloor{u(d(o_2)}\rfloor}}  \\&+ p_1^\bot * p_2^\top \frac{e^{-\varepsilon \lfloor{u(d(o_1)}\rfloor}- e^{-\varepsilon \lceil{u(d(o_2)}\rceil}}{S + e^{-\varepsilon \lfloor{u(d(o_1)}\rfloor} + e^{-\varepsilon \lceil{u(d(o_2)}\rceil}} 
    \\&+p_1^\top * p_2^\bot \frac{e^{-\varepsilon \lceil{u(d(o_1)}\rceil}- e^{-\varepsilon \lfloor{u(d(o_2)}\rfloor}}{S + e^{-\varepsilon \lceil{u(d(o_1)}\rceil} + e^{-\varepsilon \lfloor{u(d(o_2)}\rfloor}} 
    \\ &+ p_1^\top * p_2^\top \frac{e^{-\varepsilon \lceil{u(d(o_1)}\rceil} - e^{-\varepsilon \lceil{u(d(o_2)}\rceil}}{S + e^{-\varepsilon \lceil{u(d(o_1)}\rceil} + e^{-\varepsilon \lceil{u(d(o_2)}\rceil}}
\end{align*}
As $e^{-\varepsilon \lceil{u(d(o_2)}\rceil} \leq e^{-\varepsilon \lceil{u(d(o_1)}\rceil}$ and $e^{-\varepsilon \lfloor{u(d(o_2)}\rfloor} \leq e^{-\varepsilon \lfloor{u(d(o_1)}\rfloor} $,
\begin{align*}
    \Pr[\M'(d) = o_1] &- \Pr[\M'(d) = o_2]  \\
    &\geq p_1^\bot * p_2^\top \frac{e^{-\varepsilon \lfloor{u(d(o_1)}\rfloor}- e^{-\varepsilon \lceil{u(d(o_2)}\rceil}}{S + e^{-\varepsilon \lfloor{u(d(o_1)}\rfloor} + e^{-\varepsilon \lceil{u(d(o_2)}\rceil}} 
    \\&+p_1^\top * p_2^\bot \frac{e^{-\varepsilon \lceil{u(d(o_1)}\rceil}- e^{-\varepsilon \lfloor{u(d(o_2)}\rfloor}}{S + e^{-\varepsilon \lceil{u(d(o_1)}\rceil} + e^{-\varepsilon \lfloor{u(d(o_2)}\rfloor}} 
\end{align*}
Recall that $e^{-\varepsilon u(d,o_1)} \geq e^{e^{-\varepsilon u(d,o_2)}}$ and we only consider the case in which $|u(d,o_1) - u(d,o_2)| \leq 1$. Thus, either $\lfloor{u(d,o_1)}\rfloor = \lfloor{u(d,o_2)}\rfloor$ or $\lfloor{u(d,o_1)}\rfloor +1= \lfloor{u(d,o_2)}\rfloor$.

We now consider each case:
\begin{enumerate}
    \item Suppose $\lfloor{u(d,o_1)}\rfloor = \lfloor{u(d,o_2)}\rfloor = x$. 
    
Given $\lfloor{u(d,o_1)}\rfloor = \lfloor{u(d,o_2)}\rfloor $ and  $e^{-\varepsilon u(d,o_1)} \geq e^{e^{-\varepsilon u(d,o_2)}}$, $u(d,o_1) - \lfloor{u(d,o_1)}\rfloor \leq u(d,o_2) - \lfloor{u(d,o_2)}\rfloor $, and thus $p_1^\top \leq p_2^\top$. 
    
    Then 
    \begin{align*}
    \Pr[\M'(d) = o_1] &- \Pr[\M'(d) = o_2]  \\
    \geq& p_1^\bot * p_2^\top \frac{e^{-\varepsilon x}- e^{-\varepsilon (x+1)}}{S + e^{-\varepsilon x} + e^{-\varepsilon (x+1)}} 
    \\&+p_1^\top * p_2^\bot \frac{e^{-\varepsilon (x+1)}- e^{-\varepsilon x}}{S + e^{-\varepsilon x} + e^{-\varepsilon(x+1)}} \\
     =& v(p_1^\bot * p_2^\top  - p_1^\top * p_2^\bot )\\
    =& v(p_2^\top - p_1^\top)\\
    \geq& 0
\end{align*}
where $v = \frac{e^{-\varepsilon x}- e^{-\varepsilon (x+1)}}{S + e^{-\varepsilon x} + e^{-\varepsilon (x+1)}} > 0$.

    \item Suppose $\lfloor{u(d,o_1)}\rfloor +1= \lfloor{u(d,o_2)}\rfloor = x+1$.
    Then
    \begin{align*}
     \Pr[\M'(d) = o_1] &- \Pr[\M'(d) = o_2]  \\
    \geq& p_1^\bot * p_2^\top \frac{e^{-\varepsilon x}- e^{-\varepsilon (x+2)}}{S + e^{-\varepsilon x} + e^{-\varepsilon (x+2)}} 
    \\&+p_1^\top * p_2^\bot \frac{e^{-\varepsilon (x+1)}- e^{-\varepsilon (x+1)}}{S + e^{-\varepsilon (x+1)} + e^{-\varepsilon(x+1)}} \\
    =& p_1^\bot * p_2^\top \frac{e^{-\varepsilon x}- e^{-\varepsilon (x+2)}}{S + e^{-\varepsilon x} + e^{-\varepsilon (x+2)}} 
     \\
     \geq& 0
\end{align*}
\end{enumerate}

\end{proof}

We state the bounds more formally in the lemma below:
\begin{lemma}\label{lemma:heuristic}
Given a privacy parameter $\varepsilon$, a utility function $u$ and an outcome set $O$, the difference in the probability of outputting an outcome $o^* \in O$ between the randomized rounding exponential mechanism and the original exponential mechanism is upper-bounded by $A+|B|$ where 

\begin{align}
    p_{u(d,o)} &= |u(d,o) - \lceil{u(d,o)}\rceil|
\\
    q_{u(d,o)} &=  \frac{p_{u(d,o)} e^{-\varepsilon \lfloor{u(d,o)}\rfloor}}{e^{-\varepsilon \lfloor{u(d,o)}\rfloor}+\E[\sum_{o' \in O \backslash \{o\}}e^{-\varepsilon \rho(u(d,o'))}]}+\\ & \frac{(1-p_{u(d,o)}) e^{-\varepsilon \lceil{u(d,o)}\rceil}}{e^{-\varepsilon \lceil{u(d,o)}\rceil}+\E[\sum_{o' \in O \backslash \{o\}}e^{-\varepsilon \rho(u(d,o'))}]}
    \\
    A &= 1-\sum_{o \in O}q_{u(d,o)}  \\
    B&= \frac{e^{-\varepsilon u(d,o^*)}}{\sum_{o \in O}e^{-\varepsilon u(d,o)}} - q_{u(d,o^*)}
\end{align}

\end{lemma}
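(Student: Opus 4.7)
The plan is to establish the lemma by (i) deriving the lower bound $q_{u(d,o)}$ on the true output probability of the randomized rounding mechanism via Jensen's inequality, (ii) converting this collection of lower bounds into a matching upper bound using the fact that probabilities sum to $1$, and then (iii) combining with the naive probability by triangle-inequality arithmetic.

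First I would write the true randomized rounding probability on $o$ by conditioning on the rounding outcome of $o$ alone. Letting $S_{-o}$ denote the randomness for rounding all other $o' \neq o$, we have
\begin{align*}
\tilde{p}(o) = p_{u(d,o)}\,\mathbb{E}_{S_{-o}}\!\left[\frac{e^{-\varepsilon\lfloor u(d,o)\rfloor}}{e^{-\varepsilon\lfloor u(d,o)\rfloor} + X}\right] + (1-p_{u(d,o)})\,\mathbb{E}_{S_{-o}}\!\left[\frac{e^{-\varepsilon\lceil u(d,o)\rceil}}{e^{-\varepsilon\lceil u(d,o)\rceil} + X}\right],
\end{align*}
where $X := \sum_{o' \neq o} e^{-\varepsilon \rho(u(d,o'))}$. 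The function $X \mapsto a/(a+X)$ is convex on $X > 0$ (its second derivative is $2a/(a+X)^3 > 0$), so Jensen's inequality gives $\mathbb{E}[a/(a+X)] \geq a/(a + \mathbb{E}[X])$. Applying this to each of the two expectations yields precisely $\tilde{p}(o) \geq q_{u(d,o)}$.

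Next I would use summation. Because the randomized rounding mechanism outputs exactly one element of $O$, $\sum_{o \in O} \tilde{p}(o) = 1$, and therefore $\sum_{o \in O} q_{u(d,o)} \leq 1$, which makes $A = 1 - \sum_{o \in O} q_{u(d,o)} \geq 0$. For the target outcome $o^{*}$ this gives the upper bound
\begin{align*}
\tilde{p}(o^{*}) \;=\; 1 - \sum_{o \neq o^{*}} \tilde{p}(o) \;\leq\; 1 - \sum_{o \neq o^{*}} q_{u(d,o)} \;=\; q_{u(d,o^{*})} + A,
\end{align*}
so $\tilde{p}(o^{*}) \in [\,q_{u(d,o^{*})},\; q_{u(d,o^{*})} + A\,]$.

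Finally I would bound the deviation from the naive probability. Writing $p_{\text{naive}}(o^{*}) = q_{u(d,o^{*})} + B$ by the definition of $B$, the quantity $p_{\text{naive}}(o^{*}) - \tilde{p}(o^{*}) = B + (q_{u(d,o^{*})} - \tilde{p}(o^{*}))$ lies in the interval $[B - A,\; B]$. A short case analysis on the sign of $B$ shows $\max(|B-A|,|B|) \leq A + |B|$, giving the claimed bound. The main obstacle will be the Jensen step: one must verify convexity in the right variable and check that conditioning on the rounding of $o$ leaves the rest of the randomness independent so that the conditional expectation truly has the form $a/(a+X)$ with $X$ random and $a$ fixed; the rest of the argument is essentially bookkeeping.
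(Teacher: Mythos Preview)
Your proposal is correct and follows essentially the same approach as the paper: condition on the rounding of $o$ to fix the numerator, apply Jensen's inequality to the convex map $X\mapsto a/(a+X)$ to obtain the lower bound $q_{u(d,o)}$, then use $\sum_o \tilde p(o)=1$ to convert the family of lower bounds into the upper bound $\tilde p(o^*)\le q_{u(d,o^*)}+A$, and finish with the triangle-inequality arithmetic. Your write-up is in fact slightly cleaner than the paper's in Part~2 (making the interval $[B-A,B]$ and the case analysis explicit), and your flagged verification that the roundings of the other outcomes are independent of the rounding of $o$ is exactly the point the paper leaves implicit.
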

\noindent 
A complete proof of the Lemma is included in the Appendix. To build intuition, we give the following informal explanation of each part.
\begin{itemize}
    \item [$p_{u(d,o)}$] is the probability that $u(d,o)$ is rounded down.
    \item [$q_{u(d,o)}$] is a lower bound on the probability that the randomized rounding mechanism outputs $o$. The first term is a lower bound on the probability conditioned on rounding down, and the second is conditioned on rounding up. 
    \item [$A$] is the cumulative estimation error of the lower bound for all elements and therefore an upper bound on the estimation error for $q_{u(d,o)}$ for any single element. Why? Each element's probability is lower bounded by $q_{u(d,o)}$, so the sum $\sum_{o\in O}q_{u(d,o)}$ would be 1 if the estimates were exact. The maximum error on any given element's probability is therefore given by subtracting the sum of the lower bounds from 1. 
    \item [$B$] is the difference between the lower bound probability for the randomized rounding mechanism and the probability given by the original mechanism for outputting $o^*$. 
    \item[$A+|B|$] is the sum of the error on the lower bound and the difference between the lower bound and the original unrounded mechanism probability. It is therefore the maximum difference between the probabilities of outputting $o^*$ in the original unrounded mechanism and the randomized rounding mechanism.
\end{itemize}

\noindent The tightness of the error bound is directly related to the size of $A$, and as we  see in Section \ref{section:laplace}, the Lemma gives a tight estimate of the error in the important setting of implementing Laplace noise.

\paragraph{Proof of Lemma \ref{lemma:heuristic}}
\begin{proof}


We break the proof of the lemma into several parts for easier reading. In part 1, we will show that $q_{u(d,o)}$ is indeed a lower bound for the probability that the randomized rounding exponential mechanism outputs $o$. In part 2, we will show that $A+|B|$ is an upper bound on the difference in probability that $o^*$ is output by the randomized rounding exponential mechanism versus the original exponential mechanism.

\paragraph{Part 1.} Given a constant $a$ and a set of random variables $X_i$, notice that $\E[\frac{a}{a + \sum_i X_i}] \geq \frac{a}{a + \sum_i \E[X_i]}.$ This follows from observing that for a random variable $Y$, $\E[Y] \leq \frac{1}{\E[\frac{1}{Y}]}$ from Jensen's inequality, and that $\E[a*Y] = a*\E[Y]$ and $\E[a+Y] = a + \E[Y]$ and $\E[Y_1 + Y_2] = \E[Y_1] + \E[Y_2]$ for independent random variables $Y_1$ and $Y_2$ by linearity of expectation. 

The next step is to translate these observations to our scenario. Consider the expression for the expected value of the probability of selecting $o$ in the randomized rounding exponential mechanism:

\[\Pr[\M(u,d) =o ] = \E[\frac{e^{-\varepsilon \rho(u(d,o))}}{\sum_{o \in O}e^{-\varepsilon \rho(u(d,o))}}]\]

Notice that, as written, we cannot exactly translate the above using our previous observations as $e^{\varepsilon \rho((u(d,o))}$ is not a constant - it is a random variable. To get around this, we condition on $u(d,o)$ being rounded either up or down:
\begin{align*}
    \Pr[\M(u,d) =o ] =& p_{u(d,o)}\Pr[\M(u,d) =o |\text{ round down}] +\\ & (1-p_{u(d,o)}) \Pr[\M(u,d) =o |\text{ round up}] \\
    =& p_{u(d,o)}\E[\frac{e^{-\varepsilon \lfloor{u(d,o))}\rfloor}}{e^{-\varepsilon \lfloor{u(d,o))}\rfloor} + \sum_{o' \in O \backslash \{o\}}e^{-\varepsilon \rho(u(d,o'))}}] + \\ & (1-p_{u(d,o)})\E[\frac{e^{-\varepsilon \lceil{u(d,o))}\rceil}}{e^{-\varepsilon \lceil{u(d,o))}\rceil} + \sum_{o' \in O \backslash \{o\}}e^{-\varepsilon \rho(u(d,o'))}}]
\end{align*}

Now we have two terms with constant numerators, identical to $\E[\frac{a}{a + \sum_i X_i}] $ where \\ $X_i = e^{-\varepsilon \rho(u(d,o_i))}$ and $a =e^{-\varepsilon \lceil{u(d,o))}\rceil} $ or $e^{-\varepsilon \lfloor{u(d,o))}\rfloor}$.
Thus,
\begin{align*}
    \Pr[\M(u,d) =o ] \geq & \frac{p_{u(d,o)}e^{-\varepsilon \lfloor{u(d,o))}\rfloor}}{e^{-\varepsilon \lfloor{u(d,o))}\rfloor} + \sum_{o' \in O \backslash \{o\}}\E[e^{-\varepsilon \rho(u(d,o'))}]} + \\ & \frac{(1-p_{u(d,o)})e^{-\varepsilon \lceil{u(d,o))}\rceil}}{e^{-\varepsilon \lceil{u(d,o))}\rceil} + \sum_{o' \in O \backslash \{o\}}\E[e^{-\varepsilon \rho(u(d,o'))}]} \\
    \Pr[\M(u,d) =o ] \geq& q_{u(d,o)}
\end{align*}

\paragraph{Part 2.} Call the probability that the randomized rounding mechanism outputs $o$ $s_o$. The difference in probability on any given element of the randomized rounding mechanism versus the original mechanism is $|s_o -\frac{e^{-\varepsilon u(d,o)}}{\sum_{o \in O}e^{-\varepsilon u(d,o)}} |$. Taking the lower bound $q_{u(d,o)}$ as an approximation for $s_o$, notice that $s_o - q_{u(d,o)} \leq 1 - \sum_{o \in O} q_{u(d,o)}$. This follows from observing that $q_{u(d,o)}$ is a lower bound for $s_o$ for all $o \in O$, and thus the sum of the lower bounds is a lower bound on the sum, which is known to be 1. Thus the maximum error $q_{u(d,o)}$ may have on any $o \in O$ is $A = 1 - \sum_{o \in O} q_{u(d,o)}$. Using $q_{u(d,o)}$ as an approximation, the maximum error on $\max\{ A -B, B\} \leq A + | B|$.

\end{proof}

Figures \ref{fig:pointwise} and \ref{fig:boundcomparison} illustrate the consequences of Lemma \ref{lemma:heuristic} for the Laplace application. Figure \ref{fig:pointwise} shows that the maximum point-wise error between the randomized rounding and unrounded mechanisms is small, and decreases as the granularity $\gamma$ of the outcome space decreases. Intuitively, the smaller $\gamma$ is and the more outcomes are included, the closer $\E[\frac{1}{X}]$ should approximate $\frac{1}{\E[X]}$, as $X$ concentrates with more samples. Figure \ref{fig:boundcomparison} illustrates how the lower and upper bounds compare to the naive probabilities, and also shows that the lower bound error is small. How should we interpret these results? Loosely speaking, applying the lemma to a particular application gives us a worst-case bound on how different the probabilities of the randomized rounding mechanism will be from the unrounded mechanism. As we see in the case of Laplace, these errors are not too large, even in the worst case.

\begin{figure}%
\centering
\includegraphics[width=0.5\textwidth]{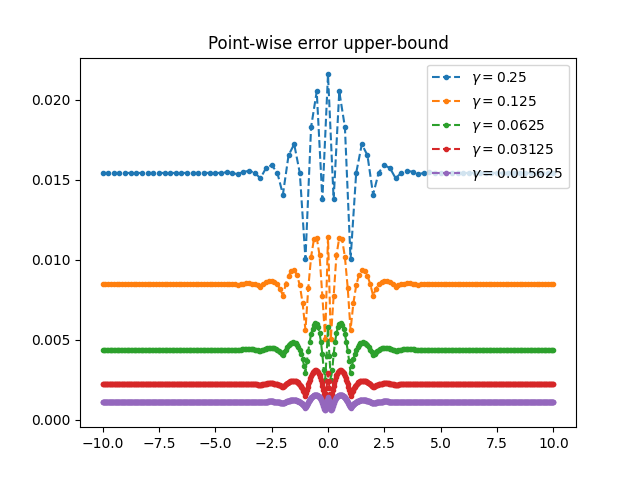}
\caption{The maximum possible point-wise difference between the output probabilities of the unrounded naive mechanism and the randomized rounding exponential mechanism at varying granularities. Notice that as granularity increases, the error decreases.}
\label{fig:pointwise}
\end{figure}

\begin{figure}%
\centering
\includegraphics[width=0.5\textwidth]{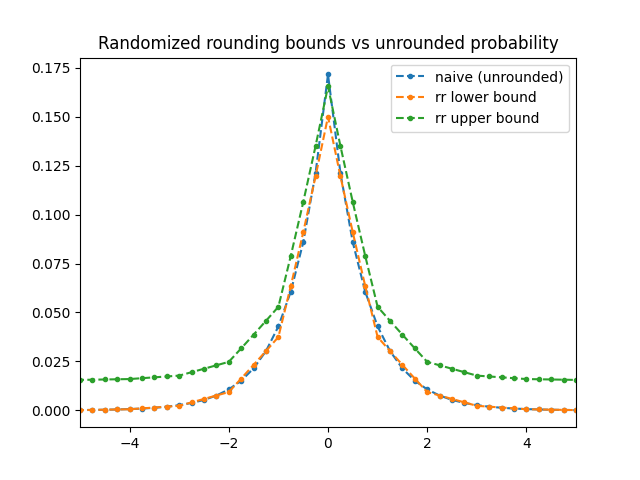}%
\caption{Accuracy bounds for randomized rounding versus the probability assigned by the naive mechanism.}
\label{fig:boundcomparison}
\end{figure}

\subsubsection*{Normalized Sampling}
Proof of Lemma \ref{lemma:normalizesampling} for Algorithm \ref{alg:psweightedsample}.
\begin{proof}
The proof consists of three parts: first, correctness of the distribution assuming sufficient precision; second, sufficiency of precision and bounding the number of random bits required; third, identification of error conditions.

\textit{Correctness.} Notice that (assuming infinite precision) this procedure amounts to partitioning the range $[0,t)$, where $t$ is the total weight, between the elements of $W$ according to their weight, sampling a value $s$ in $[0,t)$ and choosing an element in $W$ based on which partition $s$ lands in. 
To see the correctness of the sampling procedure, observe that in each iteration of the  \textbf{while} loop, any elements ``ruled out'' by the bits of $s$ we have seen so far are removed. That is, any element that has been assigned a range strictly smaller than $s$ or strictly larger than $s+2^{j-1}$ is removed. This leaves the set of elements which still could be reached by the remaining bits of $s$, namely elements that have some portion of the range $[s,s+2^{j})$. The probability that any given element is sampled is equivalent to the probability that a random value $s \in [0,t)$ falls into its assigned range of $[0,t)$, thus, each element is sampled with probability $\frac{w_i}{t}$, which is equivalent to the exponential mechanism. 


\textit{Randomness required.} Notice that each iteration of the \textbf{while} loop at Line 23 requires $p$ random bits. 
Thus it suffices to show that the loop only resets (i.e., the $s$ is rejected) a constant number of times with high probability. 
Notice that $s$ can only be greater than or equal to $t$ if the first random value selected is $1$, as $t$ requires at least $g$ bits to write. 
\footnote{Strictly speaking, $s$ will also need at least one additional lower bit to be $1$, but we give the simpler version as it's sufficient for our purposes.} The probability that $k$ samples all have $1$ as their first bit is $2^{-k}$. Thus, with probability at least $1-2^{-k}$, $k$ iterations will result in a successful sample.

\textit{Sufficient precision.} Notice that $p$ bits 
are sufficient to express any $c_i$ for $i \in [|W|]$. Imagine that an oracle agrees to read out a random value in $[0,t)$ with infinite bits of precision. After hearing $p$ bits, we have sufficient information to choose a single value in $W$, and hearing any more bits cannot change our choice. The sum in Line \ref{line:sum} is equivalent to taking the first $p$ bits from the oracle and writing them as a value in $[0,1)$ scaled by $2^g$. 
This follows from observing that at most one element can ``claim'' any range $[a2^{-p},(a+1)2^{-p})$ as all  combinations of $W$ can be expressed in $p$ bits of precision. 
Thus, $p$ bits of precision are sufficient to simulate the infinite precision procedure.

\textit{Error conditions.}
Suppose again that the procedure has access to infinite bits of randomness (i.e., that the procedure can continue by sampling additional randomness if $j<0$ in Line 17). In this case, the procedure correctly samples from the distribution. Now, suppose that sufficient precision is not available. There are two possible cases, either (1) the value sampled covered a region containing $[s, s + s^*_{g-p}2^{g-p})$, and hearing any more bits of $s^*$ wouldn't change the outcome or (2) an error is returned. Thus, either the sample is the sample that would have been drawn given infinite precision, or an error is returned. 

\end{proof}

\subsubsection*{Optimized Sampling}
\begin{algorithm}[htbp]
  \caption{The optimized normalized weighted sampling algorithm}
  \label{alg:optimizedweightedsample}
\begin{algorithmic}[1]
  \LineComment{\textbf{Inputs}: $W$, a set of weights.} 
  \LineComment{\textbf{Outputs}: $i$, an index sampled according to $p(i) := \frac{w_i}{\sum_{w_j \in W} w_j}$.}
    \Procedure{NormalizedSample}{$W$}
    \State $t \leftarrow \sum_{w \in W}w$ \Comment{The total weight.}
    \For{$i \in \{1,\ldots,|W|\}$} \Comment{Compute the cumulative weights.}
      \State $c_i \leftarrow \sum_{j=1}^i w_j$  \Comment{Each element assigned $[c_{i-1},c_i)$.}
    \EndFor
    \State $k \leftarrow \argmin_k \{2^k \geq t\}$
    \Comment{The smallest power of two $\geq t$.}
    \If{$2^k>t$}
    \State $W \leftarrow W \cup \{\bot\}$ \Comment{Add element $\bot$ with weight $2^k-t$.}
    \State $c_{|W|} \leftarrow 2^k-t$\Comment{Total weight is now $2^k$.}
    \EndIf
    \State $s \leftarrow 0$
    \State $j \leftarrow k -1$
    \State $R \leftarrow [|W|]$ \Comment{the remaining elements}
    \While{$|R| > 1$}
        \State $r_j \sim \mathbf{Unif}(0,1)$
        \State $s \leftarrow s + r_j 2^j$
        \For{$i \in R$}
            \If{$c_i \leq s $} \Comment{$s$ cannot be in $[c_{i-1},c_i)$, even if all draws are $0$.}
                \State $R \leftarrow R \backslash \{i\}$ 
            \EndIf
            \If{$i>0$ and $c_{i-1} \geq s + 2^{j} $} \Comment{$s$ cannot be in $[c_{i-1},c_i)$, even if maximum value added.}
                \State $R \leftarrow R \backslash \{i\}$ 
            \EndIf
        \EndFor
        \State $j \leftarrow j - 1$
        \If{$|R| = 1$ and $\bot \in R$} \Comment{Restart if dummy value.}
            \State $s \leftarrow 0$
            \State $j \leftarrow k-1$
            \State $R \leftarrow [|W|]$
        \EndIf
    \EndWhile
    
    \State \textbf{return} $l$ 
    \EndProcedure
  \end{algorithmic}
\end{algorithm}

\end{document}